\newtheorem{theorem}{Theorem}[section]
\newcommand{\MyScheme}{\ensuremath{\mathtt{Amulet}}\xspace}
\begin{document}

\date{}

\title{\Large \bf Amulet: Fast TEE-Shielded Inference for On-Device Model Protection}

\author{
{\rm Zikai Mao{$^1$}, Lingchen Zhao{$^1$}\thanks{Corresponding author.}, Lei Xu{$^2$}, Wentao Dong{$^3$}, Shenyi Zhang{$^1$},}\\
{\rm Cong Wang{$^3$}, and Qian Wang{$^1$}}\\
{$^1$} Key Laboratory of Aerospace Information Security and Trusted Computing, Ministry of Education,\\ School of Cyber Science and Engineering, Wuhan University,\\
{$^2$} School of Mathematics and Statistics, Nanjing University of Science and Technology, \\
{$^3$} City University of Hong Kong
}
\maketitle

\begin{abstract}
On-device machine learning (ML) introduces new security concerns about model privacy. Storing valuable trained ML models on user devices exposes them to potential extraction by adversaries. The current mainstream solution for on-device model protection is storing the weights and conducting inference within Trusted Execution Environments (TEEs). However, due to limited trusted memory that cannot accommodate the whole model, most existing approaches employ a partitioning strategy, dividing a model into multiple slices that are loaded into the TEE sequentially. This frequent interaction between untrusted and trusted worlds dramatically increases inference latency, sometimes by orders of magnitude.

In this paper, we propose Amulet, a fast TEE-shielded on-device inference framework for ML model protection. Amulet incorporates a suite of obfuscation methods specifically designed for common neural network architectures such as convolutional layers, attention blocks, and non-linear activation functions. After obfuscation by the TEE, the entire transformed model can be securely stored in untrusted memory, allowing the inference process to execute directly in untrusted memory with GPU acceleration. For each inference request, only two rounds of minimal-overhead interaction between untrusted and trusted memory are required to process input samples and output results. We also provide theoretical proof from an information-theoretic perspective that the obfuscated model does not leak information about the original weights.
We comprehensively evaluated Amulet using diverse model architectures ranging from ResNet-18 to GPT-2. Our approach incurs inference latency only 2.8-4.8$\times$ that of unprotected models with negligible accuracy loss, achieving an 8-9$\times$ speedup over baseline methods that execute inference entirely within TEEs, and performing approximately 2.2$\times$ faster than the state-of-the-art obfuscation-based method.
\end{abstract}

\section{Introduction}

With the rapid advancement of computational capabilities in mobile devices, on-device machine learning (ML) has emerged as a crucial paradigm for privacy-sensitive and latency-critical AI applications~\cite{zhao2022survey}. Deep learning models designed for terminal devices such as smartphones, embedded systems, and laptops have been widely utilized in fields like autonomous driving, image recognition, and chatbot applications, enabling real-time decision-making~\cite{li2024personal}. By processing data locally, on-device ML not only minimizes latency but also enhances user privacy by avoiding the transmission of sensitive information to the cloud. Additionally, it ensures network independence, making it adaptable to various environments without requiring a stable internet connection.

However, the deployment of ML models on untrusted devices raises significant concerns regarding intellectual property risks. Attackers may exploit vulnerabilities in these devices to steal proprietary models, posing a critical challenge to the security and privacy of on-device ML systems~\cite{nayan2024sok}. Since models are often stored in plaintext on local devices, attackers can potentially access and extract the model weights and architecture, thereby compromising the proprietary knowledge embedded in the model~\cite{sun2021mind}. This poses a serious threat to the ownership and commercial value of these models.

Leveraging Trusted Execution Environments (TEEs) to implement on-device inference represents a mainstream approach to protect models~\cite{shen2022soter, sun2023shadownet,zhang2024no,sun2024tsqp,wang2025arrow}. TEEs establish encrypted and isolated hardware enclaves that provide trusted execution environments, ensuring program integrity and data confidentiality~\cite{costan2016intel, inteltdx, pinto2019demystifying, sev2020strengthening}. By storing models within TEEs and performing inference internally, this approach can, under ideal circumstances, address intellectual property concerns. Since most personal devices today are equipped with TEE functionality~\cite{appleTEE, iotTEE}, such approaches seem to be practical for deployment. However, based on current findings, this technology still faces challenges in computational efficiency and security when applied in practical scenarios. We summarize the current limitations of TEE-shielded on-device inference as follows:

\begin{enumerate}[1)]
\item 
\textbf{Efficiency Bottlenecks.} 
Owing to the limited trusted memory space available in TEEs, which may be insufficient to store models with large parameter counts, most existing methods adopt a model-splitting strategy, where only portions of the model are loaded into the TEE at a time during inference. This approach introduces a significant problem: it necessitates extensive interaction between the TEE and external untrusted memory, resulting in substantial inter-memory communication overhead and numerous public-key cryptographic operations such as encryption, decryption, and signatures. These overheads scale with model size and preclude the full utilization of GPU or NPU acceleration, further degrading computational efficiency. For instance, Slalom already experiences a 14$\times$ performance degradation compared to unprotected inference for ResNet-18~\cite{tramer2018slalom}. With the adoption of the next-generation TEEs such as SGX2 and NVIDIA GPUs~\cite{emily2023h100} that support larger secure memory, this limitation can be mitigated in cloud-based secure inference tasks. Unfortunately, most of them currently target only high-performance processors for cloud infrastructures and do not support consumer-grade devices such as smartphones and laptops, which inherently have limited memory and even smaller available secure memory. This leads to the question:

\begin{tcolorbox}[size=small]
    \textbf{RQ1:} Is it possible to achieve minimal interaction with the TEE while preserving model privacy?
\end{tcolorbox}

\item 
\textbf{Security Vulnerabilities.} Given that completing the entire inference process within TEEs may incur unacceptable overhead, recent works have proposed solutions to improve efficiency by reducing TEE usage. However, these schemes may suffer from two types of vulnerabilities. The first is direct leakage of model information. Schemes like DarkneTZ~\cite{mo2020darknetz}, which only place part of the model structure inside the TEE, provide protection for only some layers of the model, leaving the remaining layers completely unprotected. The second is indirect leakage of model information. Approaches like ShadowNet~\cite{sun2023shadownet} attempt to protect the model by obfuscating weights, but attackers can still infer information by analyzing the distribution patterns of these obfuscated weights or the access patterns of secure memory, as demonstrated in~\cite{zhang2024no, yuan2024hypertheft}. This leads to the question:
\begin{tcolorbox}[size=small]
    \textbf{RQ2:} Is it possible to provide comprehensive protection for model privacy while maintaining inference efficiency?
\end{tcolorbox}
\end{enumerate}

\subsection{Our Contribution}

In this paper, we present \MyScheme{}, a novel on-device secure inference framework that addresses the aforementioned challenges. \MyScheme{} adopts a hybrid approach combining TEEs and weight obfuscation techniques, achieving constant-round interaction with TEEs (requiring only two interactions per inference), supporting arbitrary external accelerators such as GPUs, and providing provable security for the entire model. We summarize our contributions as follows:

\noindent \textbf{TEE minimization \& GPU maximization.}
To avoid frequent memory swapping caused by trusted memory limitations during the inference phase, we propose a novel approach that exchanges only data rather than model parameters, differing from previous model splitting strategies. The TEE completes the obfuscation of model parameters before inference begins. When a user initiates an inference request, only one interaction is needed to transfer the input to the TEE and return a transformation to untrusted memory. The transformed input is then used for inference on the obfuscated model, producing an obfuscated inference result. Finally, this result is transmitted back to the TEE for restoration, obtaining the correct inference result.

In this process, the model obfuscation is input-independent and can be completed during system idle time as a preprocessing stage, without affecting inference efficiency. Untrusted memory only needs to interact with the TEE twice, transferring minimal information about inputs and outputs to complete the inference, resulting in extremely low overhead that does not increase with model size. Furthermore, the inference process on the obfuscated model involves operations similar to normal inference, primarily based on matrix multiplication, allowing utilization of accelerators such as GPUs to enhance computational efficiency.

\noindent
\textbf{Full layer protection.}
To ensure that the obfuscated model maintains high computational efficiency and strong security guarantees, we design a series of obfuscation methods for different neural network structures, such as convolution, batch normalization, Rectified Linear Unit (ReLU) function, and multi-head attention components. We categorize neural network components into two types: linear layers and non-linear layers.
Linear layers form the core of neural networks, comprising numerous model parameters and primarily implemented through matrix operations. We protect these by using bidirectional masking to obfuscate them into random matrices and performing inference directly on these obfuscated matrices.
Non-linear layers, implemented through element-wise operations rather than matrix computations, cannot directly utilize the linear layer obfuscation scheme. For these, we design an absorb-shuffle-squeeze process, which first absorbs the non-linear layer inputs into a matrix, applies linear layer obfuscation methods to complete the computation, and then squeezes the results back to their original form.
Based on these specially designed obfuscation methods for different components, we can construct most mainstream models currently in use, including small-scale image recognition models as well as large language models.

\begin{table*}[!t]
    \caption{Comparative Analysis of Existing TEE-shielded DNN Partition Frameworks: Support Levels for Key Characteristics (\CIRCLE~ = Full support, \LEFTcircle~ = Partial support, \Circle~ = No support). ${n}$ represents the number of layers placed in the TEE.} 
    
    \label{tab:model-characteristics}
    \centering
    \resizebox{0.98\textwidth}{!}{
        \begin{tabular}{@{}l|ccccccc@{}}
        \toprule
        \textbf{System}& \makecell{\textbf{Model} \\ \textbf{Privacy}} & 
        \makecell{\textbf{High} \\ \textbf{Accuracy}} & 
        \makecell{\textbf{GPU} \\ \textbf{Acceleration}} &
        \makecell{\textbf{Support} \\ \textbf{Transformers}} &
        \makecell{\textbf{Protection} \\ \textbf{Level}} & 
        \makecell{\textbf{Portions in} \\ \textbf{TEE}} &
        \makecell{\textbf{Interactions} \\ \textbf{with TEE}} \\
        \midrule
        MLCapsule~\cite{hanzlik2021mlcapsule}       & \CIRCLE & \Circle & \Circle & \Circle & \CIRCLE & All layers & $O(1)$ \\
        Serdab~\cite{elgamal2020serdab}          & \Circle & \CIRCLE & \CIRCLE & \Circle & \LEFTcircle & Shallow layers & $O(1)$ \\
        DarkneTZ~\cite{mo2020darknetz}        & \Circle & \CIRCLE & \CIRCLE & \Circle & \LEFTcircle & Deep layers & $O(1)$ \\
        AegisDNN~\cite{xiang2021aegisdnn}        & \Circle & \CIRCLE & \CIRCLE & \Circle & \LEFTcircle & Intermediate layers & $O(1)$ \\
        Slalom~\cite{tramer2018slalom}          & \Circle & \CIRCLE & \CIRCLE & \CIRCLE & \CIRCLE & Non-linear layers & $O(n)$ \\
        SOTER~\cite{shen2022soter}           & \Circle & \CIRCLE & \CIRCLE & \CIRCLE & \LEFTcircle & Intermediate layers & $O(n)$ \\
        ShadowNet~\cite{sun2023shadownet}       & \Circle & \CIRCLE & \CIRCLE & \Circle & \CIRCLE & Non-linear layers & $O(n)$ \\
        MirrorNet~\cite{liu2023mirrornet}       & \Circle & \CIRCLE & \CIRCLE & \Circle & \LEFTcircle & Layer slices & $O(n)$ \\
        TEESlice~\cite{zhang2024no}        & \CIRCLE & \CIRCLE & \CIRCLE & \CIRCLE & \CIRCLE & Layer slices & $O(n)$ \\
        GroupCover~\cite{zhanggroupcover}      & \CIRCLE & \CIRCLE & \CIRCLE & \Circle & \CIRCLE & Non-linear layers & $O(n)$ \\
        TSQP~\cite{sun2024tsqp}            & \CIRCLE & \CIRCLE & \CIRCLE & \CIRCLE & \CIRCLE & Non-linear layers & $O(n)$ \\
        ArrowCloak~\cite{wang2025arrow} & \CIRCLE & \CIRCLE & \CIRCLE & \CIRCLE & \CIRCLE & Non-linear layers & $O(n)$ \\
        \textbf{\MyScheme}        & \CIRCLE & \CIRCLE & \CIRCLE & \CIRCLE & \CIRCLE & \textbf{Inputs \& Outputs} & $\boldsymbol{O(1)}$ \\
        \bottomrule
        \end{tabular}
        }
\end{table*}

\noindent
\textbf{Implementation \& validation.}
We have implemented and thoroughly evaluated the efficiency and availability of \MyScheme{} across popular models of varying scales, such as VGG, ResNet, BERT, and GPT-2. In terms of efficiency, \MyScheme{} achieves inference times only 2.8-4.8$\times$ that of unprotected original models, representing approximately an 8-9$\times$ improvement over baseline methods that execute the entire inference process within TEEs. Regarding accuracy, since \MyScheme{} preserve both the model structure and weight values, it produces results that are nearly identical to those of the original model, with deviations limited to minor computational errors only arising from floating-point precision.

\section{Related Works}
Generally, the existing solutions about TEE-Shielded DNN Partition inferences can be categorized into two types: shielding full layers, shielding a part of layers.

\noindent
\textbf{Shielding Full Layers of the Model.} Storing and executing all model layers within the TEE is the most straightforward solution. This method was first introduced in MLCapsule~\cite{hanzlik2021mlcapsule}, where the idea of encapsulating every layer of the model in the enclave was proposed to provide protection for sensitive data. As model sizes continue to grow, the limited trusted memory of TEEs deployed in edge devices has become the primary bottleneck constraining the efficiency of these methods.

To improve the efficiency of TEE-shielded inference, some works have proposed shielding only non-linear layers while obfuscating the others.
Slalom~\cite{tramer2018slalom} adopted an approach of outsourcing linear layers to a GPU for accelerated computation after applying one-time-pad encryption, while non-linear layers are securely computed and protected within the TEE. This method achieved significant performance improvements compared to MLCapsule by reducing the interaction between untrusted and trusted memory while leveraging GPU acceleration, inspiring numerous subsequent advancements. 
For instance, SOTER~\cite{shen2022soter} built upon this approach by using selected scalar obfuscation on the outsourced linear layers before returning them to the enclave for further computation. 
Similarly, ShadowNet~\cite{sun2023shadownet} extended the idea by performing linear transformations on the convolutional layer weights before offloading them to the GPU, while ensuring that the ReLU layers remain protected within the TEE.

However, these schemes were subsequently found to have serious security flaws. GroupCover~\cite{zhanggroupcover} presented a detailed attack scheme based on similarity search, designing reverse clustering methods for convolutional kernels. TEESlice~\cite{zhang2024no}, an advanced iteration of the vertical splitting approach, introduced feature encryption for the public linear layers to mitigate security risks. TSQP~\cite{sun2024tsqp} further expanded this framework to quantized neural networks. ArrowCloak~\cite{wang2025arrow} proposed a obfuscation scheme against direction similarity attacks, albeit at a cost of roughly 1.5× slower inference latency than TSQP. Still, all the obfuscation methods require frequent enclave-external interaction, introducing significant and often unnecessary overheads.

\noindent
\textbf{Shielding Parts of Layers.} Due to the high overhead of protecting all layers, some approaches propose enhancing efficiency by weakening security protections and safeguarding only partial model information. These approaches can be categorized into two strategies: horizontal partitioning and vertical splitting of the model.

The horizontal partitioning approach is relatively straightforward, involving placing only certain layers within the TEE for inference while leaving other layers unprotected. This includes schemes that protect shallow layers~\cite{elgamal2020serdab}, intermediate layers~\cite{xiang2021aegisdnn}, and deep layers~\cite{mo2020darknetz}. The drawback of such approaches is evident: they directly expose information from unprotected layers.
Vertical splitting methods, in contrast, partition the original model into two components of different sizes. For example, the smaller component executes within the TEE while the larger component runs on a GPU in untrusted memory~\cite{liu2023mirrornet, liu2024tbnet}. However, these approaches present significant implementation challenges as they require careful design of the partitioned architecture and additional training processes. Furthermore, since the larger unprotected component still contains partial knowledge learned by the original model, the lack of protection for this part may still pose privacy risks.

\section{Overview}

\subsection{Threat Model}
In \MyScheme, we consider a straightforward scenario where only one entity exists, i.e., the user's device, such as a laptop or a smartphone. The device is equipped with the TEE. A pre-trained machine learning model, obtained from the service provider, is deployed on the device. The goal of \MyScheme is to achieve efficient and accurate on-device ML inference without revealing the original model to the user, leveraging the capabilities of the TEE.
Below are our assumptions regarding the capabilities of the TEE and the adversary:

\noindent
\textbf{TEE}. The TEE has access to a limited space of trusted memory. We assume that data stored in this trusted memory and information related to operations executed within it remain inaccessible to the adversary. From the users' perspective, during each TEE invocation, they can only observe the inputs provided to the TEE and the corresponding outputs returned by the TEE. Additionally, the TEE supports remote attestation functionality, enabling authentication that the model deployed on the device was downloaded from a trusted service provider. Note that since this work focuses on preserving the privacy of the model rather than the data, and the data is owned by the device owner, with inference processes completed entirely locally without interaction with other entities, we do not address the privacy of the input data.

\noindent
\textbf{Adversary}. We consider the adversary, i.e., a malicious user, who may attempt to extract information from the trained model. The adversary has control over all parts of the device, including the operating system, hardware components such as GPUs, and external memory, except the TEE and its trusted memory. We assume an active adversary capable of initiating \emph{an arbitrary number of ML inference queries with arbitrary inputs} against the protected model. This threat model encompasses various attack vectors, such as known-plaintext attacks and chosen-plaintext attacks.

Furthermore, adversaries may extract partial model or training data information through pure black-box attacks~\cite{orekondy2019knockoff}. These attacks are challenging to defend against as they only require the model to function normally. Consistent with prior works~\cite{sun2024tsqp, zhang2024no}, our goal is not to defend against black-box attacks but to ensure attackers cannot gain knowledge beyond what pure black-box access permits.

\begin{figure}[ht]
    \centering
    \includegraphics[width=\linewidth]{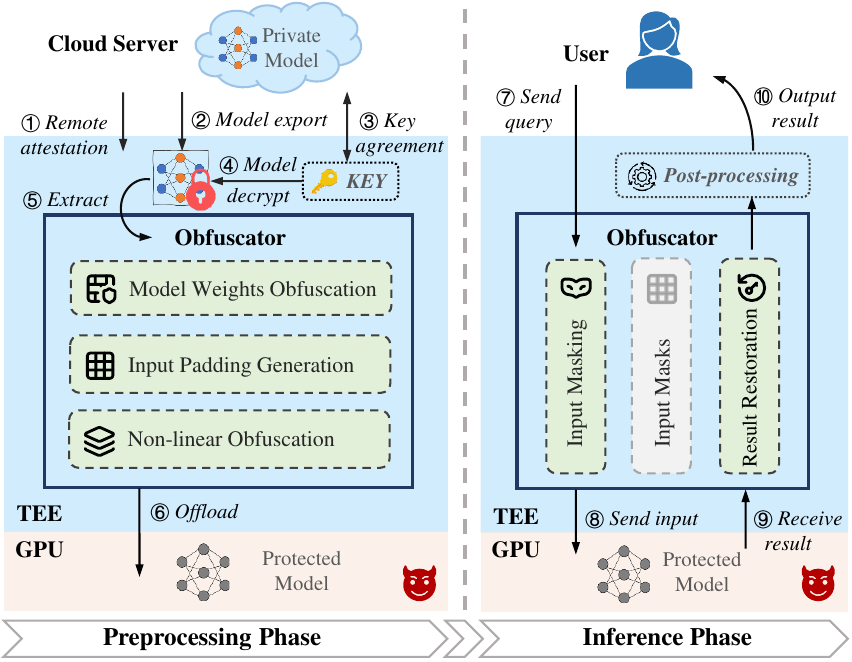} 
    \caption{Architectural Overview of \MyScheme}
    \label{fig:workflow}
\end{figure}

\subsection{Workflow}

The workflow of \MyScheme is shown in Fig.~\ref{fig:workflow}, which consists of two phases: the \textit{preprocessing phase} (\textcircled{\footnotesize{1}} - \textcircled{\footnotesize{6}}) and the \textit{inference phase} (\textcircled{\footnotesize{7}} - \textcircled{\footnotesize{10}}). Generally, the user's end device downloads an encrypted DNN model from the service provider and utilizes the remote attestation function of TEE to securely import the decryption key directly into the TEE. Subsequently, the entire inference process will be conducted locally with the help of TEE, without the need for further interaction with the server. Now we introduce the details as follows:

\noindent
\textbf{Preprocessing phase.}
The goal of the input-independent preprocessing phase is to securely download the model from the server to the TEE on the user's device and perform the obfuscation of the original model entirely within the TEE.
To begin with, the TEE initiates a remote attestation process (\textcircled{\footnotesize{1}}) with the cloud server. For instance, in the case of Intel SGX, it involves leveraging the Intel EPID Attestation technology. This step verifies the integrity and trustworthiness of the SGX enclave running on the device by generating a secure, signed report transmitted to the cloud server. The report serves as a guarantee that the device has not been compromised and is capable of executing trusted operations securely. Upon receiving the report, the cloud server validates the attestation and returns an authentication token to the enclave, authorizing the enclave to process the sensitive data.

Once the remote attestation is successfully completed, the user retrieves and downloads the encrypted model weights from the cloud server (\textcircled{\footnotesize{2}}). Using a securely established key, which is derived from the attestation process (\textcircled{\footnotesize{3}}), the model is decrypted within the TEE(\textcircled{\footnotesize{4}}). This ensures that the model weights remain confidential throughout the entire process, as the decryption occurs exclusively within the TEE and is never exposed to the external operating system or potential attackers. After the decryption, the TEE performs model obfuscation (\textcircled{\footnotesize{5}}), the core step that ensures the sensitive model weights remain protected during the inference phase. Finally, it offloads the obfuscated weight to untrusted memory (\textcircled{\footnotesize{6}}).

\noindent
\textbf{Inference phase.}
The goal of the inference phase is to enable the untrusted GPU to perform the inference process using the obfuscated model weights.
When the user submits a request for inference (\textcircled{\footnotesize{7}}), the input data are first converted into tensors within the untrusted memory. These tensors are then transferred into the TEE, where they are masked by the Obfuscator that was constructed during the preprocessing phase. The TEE subsequently outputs the masked request tensors back to the untrusted memory (\textcircled{\footnotesize{8}}), where they are processed by the obfuscated model and untrusted GPU. Once the inference is completed in the normal execution environment, the masked results are sent back to the TEE for unmasking (\textcircled{\footnotesize{9}}). Finally, the TEE reveals the results to the user, ensuring the security of the model and intermediate results security by providing only the final output (e.g., classification labels) without disclosing the raw confidence scores generated by the model (\textcircled{\footnotesize{10}}).

\noindent
\emph{Remark.} The core innovation of \MyScheme lies in the design of the Obfuscator. The implementation details of downloading an encrypted model from the server, as well as the interaction between the TEE and the untrusted environments, are similar to those in most related works~\cite{sun2024tsqp, zhang2024no, tramer2018slalom}. These steps adhere to the principle that only encrypted or obfuscated data should be stored and processed in untrusted memory. Therefore, we omit the details of these steps. Interested readers are encouraged to refer to \cite{costan2016intel} for further insights.
In addition, the obfuscation step (\textcircled{\footnotesize{5}}) can also be completed on the server side as long as both the server and client know the masks used. We chose to complete it on the client side considering that devices may not always be connected to the server, and this overhead is incurred only once.

\section{Design of Amulet}

In this section, we present the detailed design of \MyScheme{}. 
The inference process of a neural network can be formally represented as an operator sequence $\textbf{NN}:= \{\mathbf{L}_1, ..., \mathbf{L}_n\}$, where $n$ denotes the number of layers in the model. Generally, the operations of these layers can be categorized into two types: \textit{linear layers}, such as dense layers and convolutional layers, and \textit{non-linear layers}, such as the ReLU function and the Gaussian Error Linear Unit (GELU) function. Our obfuscation scheme follows this architecture, sequentially obfuscating and computing each layer. Specifically, we have designed distinct obfuscation patterns for each type of operation, tailored to the specific characteristics of both linear and non-linear transformations.

\subsection{Intuition}
Our solution begins with the relatively straightforward case of linear operations. The linear operations in DNNs can be represented as matrix multiplications. To obfuscate the weights of the linear layers, i.e.,  one of the matrices involved in the matrix multiplication, we employ the following encoding method to implement a linear transformation: $\tilde{X} = P \cdot X \cdot Q$,
where $P$ and $Q$ are matrices whose elements are uniformly and randomly sampled from a predefined distribution $U(-1, 1)$. This approach does not require costly cryptographic operations and can leverage GPU acceleration capabilities due to its implementation through matrix multiplication. If we can utilize these random matrices to replace the parameters of the original model's linear layers and complete the inference process, we can effectively protect the model weights during inference.

Non-linear layers might initially appear exempt from the need for obfuscation because they lack trainable weights like linear layers. However, this assumption overlooks a critical vulnerability that disclosing the outputs of non-linear layers to an adversary could enable the inference of their inputs, which correspond to the un-obfuscated outputs of preceding linear layers. Armed with this information, the adversary could formulate linear equations based on query responses and solve for the weights and biases of the linear layers, thereby reconstructing the entire linear layers with high fidelity, as demonstrated in \cite{carlini2020cryptanalytic}. Therefore, obfuscating non-linear layers is also essential.

Unlike linear layers, non-linear layers are implemented by element-wise functions rather than matrix multiplication operations. 
As a result, the methods used for obfuscating linear layers are no longer applicable. This necessitates the development of specialized techniques to obfuscate non-linear layers effectively, ensuring that the entire model remains secure against potential attacks.

\subsection{Dense Layer}

The dense layers of neural networks perform straightforward matrix multiplication operations, making them suitable for masking techniques. 

For the $i$-th dense layer, let the weight matrix be $W_i$ and the bias be $b_i$. Given the input $X$, the computation performed by this layer is: $Y = X \cdot W_i^T + b_i.$

During the preprocessing phase, the TEE randomly samples transformation matrices $P$, $Q_{i-1}$, and $Q_i$ from the uniform distribution $U(-1, 1)$. These matrices are used to compute the masked versions of the input, weight, and bias as follows:
\begin{align}
    \tilde{X} = P X Q_{i-1}, \quad \tilde{W}_i = Q_{i-1}^{-1} W_i^T Q_i, \quad \tilde{b}_i = P \cdot (\mathbf{1}^T \cdot b_i) \cdot Q_i.
\end{align}

These masked results can be securely stored in untrusted environments.

During the inference phase, the dense layer operation on masked parameters and inputs is similar to the operation on the original version, which can be represented as  

\begin{equation}
\begin{split}
\tilde{Y} &= \tilde{X}\tilde{W}_i^T + \tilde{b}_i\\
&= (PXQ_{i-1})\cdot (Q^{-1}_{i-1}W^TQ_i) + P\cdot (\mathbf{1}^T \cdot b_i) \cdot Q_i\\
&= P(XW_i^T + b_i)Q_i.
\end{split}
\end{equation}

The output $P(XW_i^T + b_i)Q_i$ is equivalent to the original result but projected by the transformation matrices $P$ and $Q_i$. Since the two matrices are unknown to the adversary, the private values of $X$, $W_i$, and $b_i$ remain protected throughout the computation.

\subsection{Convolution Layer}

 The convolution layer can be regarded as a variant of matrix multiplication. Consider the $i$-th convolutional layer with operation $\mathbf{Y} = \text{Conv}(\mathbf{X}, \mathbf{W}_i)$, where $ \mathbf{X}$ represents the input tensor and $\mathbf{W}_i$ denotes the filter weight tensor. We can represent the filter $ \mathbf{W}_i $ as a structured hypermatrix:
$$
\mathbf{W}_i = \begin{bmatrix}\mathbf{w}_{11} & \mathbf{w}_{12} & \cdots &\mathbf{w}_{1d_2}
    \\ \mathbf{w}_{21} & \mathbf{w}_{22}& \cdots & \mathbf{w}_{2d_2}
    \\ \vdots & \vdots & \vdots & \vdots
    \\ \mathbf{w}_{d_31} & \mathbf{w}_{d_32} & \cdots & \mathbf{w}_{d_3d_2}
    \end{bmatrix},
$$
where each element $\mathbf{w}_{kl}$ represents the convolution kernel connecting the $l$-th input channel to the $k$-th output channel in $i$-th layer. $d_2$ and $d_3$ denote the number of input and output channels, respectively.

During the preprocessing phase, the masking process is similar to that of the dense layer. Given random invertible matrices $P, Q_{i-1}, Q_i$, we first apply hypermatrix multiplication to transform $\mathbf{W}_i$ into its masked version $ \tilde{\mathbf{W}}_i $ as follows:
\begin{equation}
\begin{split}
&\tilde{\mathbf{W}_i} = Q_{i-1}^{-1}\cdot \mathbf{W}_i \cdot Q_i\\
&\tilde{\mathbf{w}}_{sr} = \sum^{d_2}_{l=1}\sum^{d_3}_{k=1}(Q_{i-1}^{-1})_{sk}\cdot \mathbf{w}_{kl}\cdot (Q_i)_{lr}.
\end{split}
\end{equation}
The masked convolution layer is subsequently performed as:
\begin{equation}
\begin{split}
\text{Conv}(\tilde{\mathbf{X}}, \tilde{\mathbf{W}}_i) &= \text{Conv}(P\cdot \mathbf{X}\cdot Q_{i-1},\ Q^{-1}_{i-1}\cdot \mathbf{W}_i\cdot Q_i) \\
&= P\cdot \text{Conv}(\mathbf{X}, \mathbf{W}_i)\cdot Q_i.
\end{split}
\end{equation}
The bias vector $b_i = [b_1, \cdots, b_{d_3}]$ undergoes a similar transformation using the same masks: $ \tilde{\mathbf{b}}_i = P \cdot (\mathbf{1}^T \cdot \mathbf{b}_i) \cdot Q_i$. Analogous to the dense layer, the final output $ P \cdot \text{Conv}(\mathbf{X}, \mathbf{W}_i) \cdot Q_i $ of the inference phase represents a masked version of the true output.

\subsection{BatchNorm Layer}

Batch Normalization (BatchNorm) is a popular gadget that can accelerate convergence and simplify the backward propagation process. 
For an input vector $ X $, BatchNorm is computed as:
\begin{equation}
\begin{split}
\text{BatchNorm}(X) &= \frac{X - \mu}{\sigma} \cdot\gamma + \beta
\end{split}
\end{equation}
where $ \mu = \frac{1}{d} \sum_{i \in [d]} x_i $ is the mean and $ \sigma = \sqrt{\frac{1}{d} \sum_{i \in [d]} (x_i - \mu)^2 + \epsilon} $ is the standard deviation. This step can be rewritten as a linear transformation: $\text{BatchNorm}(X) = X \cdot W_{Bn} + b_{Bn}$, where
\begin{equation}
\begin{split}
&W_{Bn} = \begin{bmatrix}
   \frac{\gamma_1}{\sigma_1} & & \\
   & \ddots & \\
   & & \frac{\gamma_n}{\sigma_n}
   \end{bmatrix},\ 
   b_{Bn} = \left[\beta_1 - \frac{\mu_1 \gamma_1}{\sigma_1}, \cdots \beta_n - \frac{\mu_n \gamma_n}{\sigma_n}\right].
\end{split}
\end{equation}
During the inference phase, as the mean $ \mu $ and variance $ \sigma $ of BatchNorm are frozen, this layer can be conceptualized as a linear transformation that scales the input $ X $ using a scaling factor $ \gamma $ and shifts it by a bias term $ \beta $.

Given that BatchNorm is typically applied subsequent to convolution operations, computational efficiency can be optimized by combining the convolution kernel and BatchNorm parameters into a modified kernel $ \hat{W} = W_{Conv} \cdot W_{Bn} $ and bias $ \hat{b}=  b_{Bn} + b_{Conv} \cdot W_{Bn} $, thereby integrating the two operations into a single computational step. 
Notably, since these transformations are independent of input data, they can be completed during the preprocessing phase without compromising the efficiency of the inference phase.

\subsection{AvgPool Layer}

The average pooling (AvgPool) layer is a widely employed pooling mechanism for reducing the spatial dimensions of feature maps. It computes the average value within localized regions of the input feature map, generating output feature maps of reduced dimensionality. Since Avgpool operates on the input tensor in a manner that preserves linearity with respect to scalar multiplication and addition, it can be regarded as a linear transformation within the framework of tensor operations. This linear property allows Avgpool inherently compatible with our proposed masking pipeline. Formally, given the input tensor $\mathbf{X}$ is masked with random matrices $P$ and $Q$, we have:
\begin{align} \text{Avgpool}(P\cdot \mathbf{X} \cdot Q) = P \cdot \text{Avgpool}(\mathbf{X})\cdot Q. \end{align}
This formulation demonstrates that the transformations of AvgPool do not affect the integrity of the masks, thus allowing direct compatibility with other layers in the network.

\subsection{Flatten Layer}

The flatten layer is used to transform high-dimensional tensors (e.g., 4D tensors from convolutional layers) into 2D matrices. In  \MyScheme{}, to mask its output, we generalize the traditional flatten operation to 4D tensors, utilizing Kronecker products to preserve the relationships between the dimensions of the tensors.

We define the flatten operation $\text{Flatten}(\mathbf{X}, n)$ to transform the tensor $\mathbf{X}$ into a matrix starting from its $n$-th dimension. For a 4D tensor $\mathbf{X}$, the operation $\text{Flatten}(\mathbf{X}, 1)$ flattens the last three dimensions into a single dimension. A key property of two-dimensional flattening operation is:
\begin{align}
    \text{Flatten}(PXQ) = \text{Flatten}(X)(P^T \otimes Q),
\end{align}
where $P$, $X$, $Q$ are matrices and $\otimes$ denotes the Kronecker product.

When the input tensor $\mathbf{X}$ is masked using matrices $P$ and $Q$, resulting in $P \cdot \mathbf{X} \cdot Q$, the left mask $P$ operates on the first dimension and thus does not interfere with the flatten operation, allowing it to be extracted directly. The right mask $Q$ applies to the last three dimensions of $\mathbf{X}$. By employing the Kronecker product, the effect of $Q$ on the flattened tensor is captured by $I_n \otimes Q$.

In summary, for 4D tensors, the flatten operation with masking is generalized as:
\begin{equation}
\begin{split}
   \text{Flatten}(P \cdot \mathbf{X} \cdot Q, 1) = P \cdot \text{Flatten}(\mathbf{X}, 1) \cdot (I_n \otimes Q),
\end{split}
\end{equation}
where $I_n$ is the identity matrix of size $n$, with $n$ being the product of the last two dimensions of $\mathbf{X}$. The output of the flatten operation maintains compatibility with the masking framework.

\subsection{Activation Layer}

The activation layer transforms inputs through element-wise non-linear functions, enabling neural networks to learn complex patterns. However, if we continue using random matrix masking as in linear layers, applying activation functions to the obfuscated results would disrupt the reversibility of linear obfuscation, making it impossible to recover the original results. 
To address this challenge, \MyScheme{} first transforms the random matrix mask into a permutation matrix. This transformation is achieved by multiplying the obfuscated input with specially constructed matrices. To prevent potential leakage of the original input values during the transfer process from the random matrix mask to the permutation matrix mask, we perform an additional Kronecker product operation between the input and the random matrix. Since the Kronecker product possesses the mixed-product property, which combines the original matrix product and the Kronecker product in multiple Kronecker product operations, we can obtain the original result without exposing the original data. 
Here we present two cases about the ReLU function and the GELU function to demonstrate how to obfuscate the non-linear activation layers. 

\noindent
\textbf{ReLU Function.}
The ReLU function serves to determine the non-negative component of the input, defined as $\text{ReLU}(x) = \text{max}(x,0)$. 

During the preprocessing phase, the TEE first generates two masks:
\begin{align}
M_1 = \pi_3(\pi_1P^{-1} \otimes R_1), \quad M_2 = (Q^{-1}\pi_2 \otimes R_3)\,\pi_4,
\end{align}
where $R_1, R_2, R_3$ are drawn from the uniform distribution $U(0, 1)$, and $\pi_i$ are random permutation matrices which are square binary matrices with a single 1 in each row and column, and all other entries 0. These two masks are used to eliminate the original random mask while converting it into a random permutation mask. The notation $\otimes$ represents the Kronecker product operation. And we define $\bar{R} = R_1R_2R_3$ for convenience in subsequent discussions.

During the inference phase, \MyScheme{} first extends the dimension of the input $\tilde{X}$ using $R_2$, and then due to the mixed-product property of the Kronecker product, the masks are transformed as follows:
\begin{equation}
\begin{split}
Z &= M_1 \cdot (\tilde{X} \otimes R_2) \cdot M_2 \\
&= \pi_3(\pi_1P^{-1} \otimes R_1)\cdot (PXQ \otimes R_2)\cdot (Q^{-1}\pi_2 \otimes R_3)\,\pi_4 \\
&= \pi_3(\pi_1X\pi_2 \otimes \bar{R})\,\pi_4.
\end{split}
\end{equation}

Subsequently, the ReLU operation can be directly applied to $Z$. Because random permutations only shuffle the order without changing the values, and the ReLU exhibits positive homogeneity. Thus, we have $\text{ReLU}(Z) = \pi_3(\pi_1\text{ReLU}(X)\,\pi_2\otimes \bar R)\,\pi_4$.
Finally, $M_1^{-1}$ and $M_2^{-1}$ along with $R_2$ are employed to convert the random permutation matrix mask back to the previous random masks $P$ and $Q$ for computation in the subsequent layer. The aforementioned process is summarized in Alg.~\ref{alg:relu}. Compared to the original ReLU function which only performs element-wise comparison operations, four additional matrix multiplications are required, along with one Kronecker product and its inverse operation in \MyScheme{}. Fortunately, all of the aforementioned operations can be accelerated using GPU.

\begin{algorithm}
    \LinesNumbered
    \caption{$\prod_{\text{ReLU}}(\tilde{X})$}
    \label{alg:relu}
    \begin{flushleft}
    \textbf{Preprocessing:} TEE generates random matrices $\{R_i\}_{i=1}^3$ and random permutation matrices$\{\pi_i\}_{i=1}^4$, then calculates $M_1 = \pi_3(\pi_1P^{-1}\otimes R_1$), $M_2 = (Q^{-1}\pi_2\otimes R_3)\pi_4$; \\ 
    \textbf{Input:}  Tensor $\tilde{X} = PXQ$; \\
    \textbf{Output:}  Tensor $P\cdot \text{ReLU}(X) \cdot Q$;
    \end{flushleft}
    \begin{algorithmic}[1]
        \STATE Enlarge the matrix $\tilde{X} \otimes R_2$;
        \STATE Take the linear transformation to remove the original \\
        \quad Mask $Z = M_1\cdot (\tilde{X} \otimes R_2) \cdot M_2$;
        \STATE Compute ReLU($\cdot$) function $\text{ReLU}(Z) = \pi_3(\pi_1\text{ReLU}(X)\,\pi_2\otimes \bar R)\,\pi_4$;
        \STATE Reconstruct masks $M_1^{-1}\text{ReLU}(Z)M_2^{-1} = (P\cdot \text{ReLU}(X)\cdot Q) \otimes R_2$;
        \STATE Recover the result as $P\cdot \text{ReLU}(X)\cdot Q$.
    \end{algorithmic}
\end{algorithm}

\noindent
\textbf{GELU Function.}
The GELU function is widely regarded as an advanced alternative to the ReLU due to its smooth transition for values near zero. Although the implementation for GELU is more complex than ReLU, our approach adheres to the idea of obfuscating activation functions. Mathematically defined as $\text{GELU}(x) = x\cdot \Phi(x)$, where $\Phi$ represents the standard Gaussian cumulative distribution function, this probabilistic non-linear structure makes GELU a non-homogeneous function.  H, the random matrix $\bar R$ cannot be separated from the results after non-linear computation. To address this, we specially design $\bar R$ and employ index-selected matrices to extract valid activation values from the Kronecker product space.

During the preprocessing phase, TEE constructs a random matrix $R_2$ to mask the original input $X$, two masks to transform the original random mask of the input into random permutation matrices $M_1, M_2$.
Note that matrix $\bar R$ in the GELU function is constructed such that it contains exactly one element set to 1, ensuring the correctness of subsequent operations. Additionally, two further masks are established for reconstructing the final result:
\begin{equation}
M_3 = P\pi_1^TE_1\pi_3^T, \quad M_4 = \pi_4^TE_2\pi_2^TQ,
\end{equation}
$E_1$ and $E_2$ are index-selected matrices picking the corresponding elements from Kroneckor product. They satisfy the following relationship:
\begin{equation}
E_1\cdot \text{GELU}(\pi_1X\pi_2 \otimes \bar R)\cdot E_2 = \text{GELU}(\pi_1X\pi_2).
\end{equation}

During the inference phase, given the input $\tilde{X}$, similar to the process in ReLU, \MyScheme{} first applies an expansion, followed by a mask transformation, and then computes GELU function obtained $\text{GELU}(Z)$. Subsequently, unlike the ReLU procedure, we utilize $M_3$ and $M_4$ to recover the result obtained through the random matrices $P$ and $Q$ as follows:
\begin{equation}
    P\cdot \text{GELU}(X)\cdot Q = M_3\cdot \text{GELU}(Z)\cdot M_4.
\end{equation}

Throughout this process, our scheme introduces additional computational overhead that includes four matrix multiplications and one Kronecker product operation.

\subsection{Attention Block}
The self-attention layer, which enables a model to dynamically weigh and capture relationships between all positions in a sequence, is the core of the Transformer architecture~\cite{vaswani2017attention}. It operates through Query ($Q$), Key ($K$), and Value ($V$) projections combined with scaled dot-product attention. For an input tensor $X$, the self-attention operation can be formalized as follows:
\begin{align}
   &Q = XW_q, \quad K = XW_k, \quad V = XW_v, \\
   &\text{Attention}(Q, K, V) = \text{softmax}\left(\frac{QK^T}{\sqrt{d}}\right) V,
\end{align}
where $W_q, W_k, W_v$ is the self-attention weights. In large-scale language models, a commonly employed mechanism is the Multi-Head Self-Attention (MHA) block, which represents an advanced and sophisticated extension of the standard attention layer. MHA can be formalized as:
\begin{align}
   &\text{MHA}(X) = (\text{head}_0||\cdots||\text{head}_{h-1})W_o, \\
   &\text{head}_i = \text{Attention}(Q_i, K_i, V_i) \text{ for } i \in [h].
\end{align}

During the preprocessing phase of \MyScheme{}, the weight matrices of the attention block are obfuscated as:
\begin{equation}
\begin{split}
\tilde{W}_{q_i} &= N^{-1}W_{q_i}P_i^T, \quad \tilde{W}_{k_i} = N^{-1}W_{k_i}P_i^{-1}, \\
\tilde{W}_{v_i} &= N^{-1}W_{v_i}S_i, \quad \tilde{W}_o = \mathcal{S}^{-1}W_{o}N,
\end{split}
\end{equation}
where $P_i, S_i$ represent random masks for each attention head $\text{head}_i$, $\mathcal{S}$ denotes a block diagonal matrix composed of submatrices $S_i$, and $N$ is a random mask correlated with layer's input. These transformations can be executed during the preprocessing phase, necessitating four matrix multiplications.

During the inference phase, we obfuscate the input tensor $X$ through multiplication with a random permutation matrix $\pi$ and a random matrix $N$, represented mathematically as $\tilde{X} = \pi X N$. This formulation is strategically designed to enhance computational efficiency while facilitating softmax operations in transformer architectures. As demonstrated by Xu et al.~\cite{xu2024permutation}, permutation matrices possess the valuable property of guaranteeing that the softmax operation merely rearranges rows and columns without cross-mixing them. This mathematical property is critical for preserving the functional integrity of attention mechanisms while maintaining model privacy. Then, the transformed query, key, and value matrices are calculated as
\begin{equation}
\begin{split}
\tilde{Q}_i &= \tilde{X} \tilde{W}_{q_i} = \pi X W_{q_i} P_i^T  = \pi Q_i P_i^T, \\
\tilde{K}_i &= \tilde{X} \tilde{W}_{k_i} = \pi X W_{k_i} P_i^{-1} = \pi K_i P_i^{-1}, \\
\tilde{V}_i &= \tilde{X} \tilde{W}_{v_i} = \pi X W_{v_i} S_i = \pi V_i S_i.\\
\end{split}
\end{equation}
The attention score for each head in the multi-head attention block can be calculated as:
\begin{equation}
\begin{split}
\text{head}_i' &= \text{softmax}(\frac{\tilde{Q}_i\tilde{K}_i^T}{\sqrt{d}})\tilde{V}_i \\
&= \text{softmax}\left(\frac{\pi Q_iK_i^T \pi^T}{\sqrt{d}}\right) \pi V_iS_i \\
&= \pi\, \text{head}_i\cdot S_i.
\end{split}
\end{equation}

Due to the distinctive properties of permutation matrices when applied to query-key interactions within the softmax function, these mask matrices can be effectively separated from the function. This mathematical property enables the scores of all attention heads to be integrated within the MHA block according to the following formulation:
\begin{equation}
\begin{split}
\text{MHA}(\tilde{X}) &= (\text{head}'_0||\cdots||\text{head}'_{h-1})\cdot \tilde{W}_o \\
&= \pi(\text{head}_0||\cdots||\text{head}_{h-1})\mathcal{S}\cdot \tilde{W}_o \\
&= \pi\, \text{MHA}(X)N.
\end{split}
\end{equation}

The skip connection architecture inherent in transformer models requires that the mask $N$ for the output weight matrix $W_o$ be identical to the mask applied to the input tensor $X$. Finally, the step of the attention block can thus be succinctly expressed as $F(P X N) = P \cdot F(X) \cdot N$, where $P$ and $N$ represent the masking matrices and $F(\cdot)$ represents the normal MHA operation. This formulation demonstrates a critical advantage of our approach that the obfuscation operations impose no additional computational overhead during the inference phase.

\subsection{Layer Normalization}
Layer Normalization (LayerNorm) constitutes an essential component of the transformer architectures. The LayerNorm operation can be formally defined as:
\begin{align}
\text{LayerNorm}(X) = \gamma \, \text{Norm}(X) + \beta,
\end{align}
where $\gamma$ and $\beta$ represent learnable scaling and shifting parameters, respectively. The main difference between LayerNorm and BatchNorm lies in the requirement for an additional normalization step for each layer, rather than directly utilizing frozen means and variances to complete the computation. Since normalization is not a linear operation, this characteristic causes the masking approach employed in BatchNorm incompatible with LayerNorm. To address this limitation, our framework incorporates an additional obfuscation step for the input of LayerNorm.

During the preprocessing phase, we first introduce a gadget matrix $G = \lambda I + \mathbf{r}^T \cdot \mathbf{1}$, where $I$ is an identity matrix, $\mathbf{r}$ is a random vector, and $\lambda$ is a random scalar. This design ensures that $\text{Norm}(X \cdot G) = \text{Norm}(X)$. 
Next, we obfuscate the gadget matrix $G$ to obtain $\tilde{G} = N^{-1} G\pi_2$.

During the inference phase, upon receiving the LayerNorm input $\tilde{X}$, i.e., the output from the prior self-attention layer, we first execute the mask replacement operation as follows:
\begin{equation}
\tilde{X} \cdot \tilde{G} = (\pi_1XN) \cdot (N^{-1}G\pi_2) = \pi_1XG\pi_2,
\end{equation}
$\tilde{G}$ represents the masked gadget matrix. Then, we can derive the obfuscated LayerNorm operation as:
\begin{equation}
\gamma\, \text{Norm}(\tilde{X} \tilde{G}) + \beta = \gamma (\pi_1\cdot \text{Norm}(X)\cdot \pi_2) + \beta,
\end{equation}
where $\pi_i$ denotes batch permutation matrices, with the batch size corresponding to the number of input samples. The processes for handling the scaling parameter $\gamma$ and shifting parameter $\beta$ are identical to those in BatchNorm.

\begin{algorithm}[t]
\caption{$\prod_{\text{Amulet}}(X, W)$}
\label{alg:allprocess}
\KwIn{Input $X$, model weights $\{W_i\}_{i=1}^{n}$, where $n$ is the number of linear layers}
\KwOut{Inference result $Y$}

\tcc{$\triangledown$ \textbf{One-time Model Obfuscation (within TEE)}}
TEE generates random invertible matrices $\{Q_i\}_{i=0}^n$\;
TEE obfuscates weight matrices: $\tilde{W_i} = Q_{i-1}^{-1} W_i Q_i$, $\forall i \in [1,n]$

\tcc{$\triangledown$ \textbf{Input-independent Preprocessing (within TEE)}}
TEE generates two random matrices $P, T$\;
TEE computes $T_Y = P \cdot \mathbf{L_1}(T, W_1) \cdot Q_1$ for input layer\;\tcp{Input Padding Generation}
TEE generates random $\{\pi_j\}_{j=1}^4$, $\{R_j\}_{j=1}^3$ for each non-linear layer and initializes masks \tcp*[r]{Non-linear Obfuscation}

\tcc{$\triangledown$ \textbf{Inference (cross TEE \& GPU)}}
User transfers input $X$ to TEE from untrusted memory\;
TEE obfuscates input: $\tilde{X} = P(X-T)Q_1$ and sends it to untrusted memory \tcp*[r]{Input Masking}
User computes $\tilde{Y}_1 = \mathbf{L_1}(\tilde{X}, \tilde{W_1}) + T_Y$\;
User performs forward pass through all layers\;
User obtains $\tilde{Y}_n = P Y_n Q_n$ and sends to TEE\;
TEE recovers $Y_n$, computes $\arg\max(Y_n)$ and returns $Y$ to untrusted memory \tcp*[r]{Result Restoration}
\end{algorithm}

\subsection{Obfuscating the Input and Output}

After obfuscating all intermediate layers of the model using the aforementioned approach, a potential vulnerability remains regarding the input layer. Since the adversary knows its input and can choose arbitrary values, it might potentially conduct chosen-plaintext attacks by constructing specific inputs to analyze model information. To address this issue, our scheme implements an additional one-time pad for model inputs, preventing users from influencing the obfuscated results through input selection. The notation $\textbf{L}_i$ represents the $i$-th linear layer in the neural network.

For improved clarity, we further partition the input-independent preprocessing phase into two distinct parts: weights obfuscation and one-time-pads (OTPs) generation. For the weights obfuscation part, the TEE generates random invertible matrices $Q_i$ and obfuscates the weight matrix of each linear layer $W_i$ into $Q_{i-1}^{-1}W_iQ_i$. For the OTPs generation part, the TEE generates fresh random matrices $P$ and $T$, computes $T_Y = P\cdot \mathbf{L_1}(T,W_1)\cdot Q_1$ to mask the inputs received during the inference phase, and prepares additional random matrices $\{\pi_j\}_{j=1}^4$ and $\{R_j\}_{j=1}^3$ for each non-linear layer. Note that the OTPs are generated fresh for each inference, but since the generation process is independent of the input, they can be generated during the preprocessing phase. The masks $P$, $Q_0$, $Q_n$ are always stored in the TEE for use in the subsequent inference phase.

During the inference phase, the user first sends the original input $X$ to the TEE. The TEE masks $X$ to obtain $\tilde{X} = P(X-T)Q_0$ and returns it to untrusted memory. The user then feeds $\tilde{X}$ to the obfuscated first layer, resulting in $\tilde{Y}_1 = \mathbf{L_1}(\tilde{X}, \tilde{W}) + T_Y$. Starting from $\tilde{Y}_1$, the computation proceeds through each layer as described earlier in this chapter, until reaching the model's final layer $\mathbf{L}_n$. After obtaining $\tilde{Y}_n = P\cdot \mathbf{L_n}(X, W)\cdot Q_n$, the user returns this result to the TEE, which removes the obfuscation using $P$ and $Q_n$, computes the remaining steps (such as $\textsf{argmax}$ for classification tasks), and returns the final result to the user. Alg.~\ref{alg:allprocess} summarizes the complete steps required for performing a single inference using \MyScheme{}.

\section{Security Analysis}

In this section, we aim to establish the security of \MyScheme{} from an information-theoretic standpoint.
The analysis is divided into three parts: (1) input obfuscation, (2) intermediate linear layer obfuscation, (3) intermediate non-linear layer obfuscation.

\begin{theorem} \label{theorem:1}
(Input Obfuscation) Let $W$ denote the weight matrix of the first layer. Assume that $P$, $Q$, and $S$ are independently and uniformly sampled invertible matrices, and that the random matrices $\{T_i\}_{i=1}^t$ are independently and uniformly selected for each round. Then, for any polynomial number of rounds $t$, it holds that:
\[
I(W; \mathcal{O}_1, \dots, \mathcal{O}_t) = 0,
\]
where $\mathcal{O}_i = \{X_i, \tilde{X}_i, \tilde{T}_i, \tilde{W}\}$ with $\tilde{X}_i = P(X_i - T_i)Q,\, \tilde{T}_i = PT_iWS$ and $\tilde{W} = Q^{-1}WS$.
\end{theorem}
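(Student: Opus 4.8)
The plan is to prove the stronger statement that the joint law of the transcript $(\mathcal{O}_1,\dots,\mathcal{O}_t)$ does not depend on the value of $W$; since $I(W;\mathcal{O}_1,\dots,\mathcal{O}_t)=0$ is equivalent to $W$ being independent of the observables, it suffices to show the conditional distribution of the transcript given $W=w$ is the same for every admissible $w$. The whole argument rests on a single reparametrization of the secret randomness that makes $W$ disappear from everything the adversary sees, so I would set this up once and let it propagate through all $t$ rounds rather than reasoning round by round.

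First I would observe that $W$ enters the observables \emph{only} through the product $WS$: it does not appear in $\tilde{X}_i=P(X_i-T_i)Q$ nor in the chosen input $X_i$, and in the remaining two it occurs solely as $\tilde{W}=Q^{-1}(WS)$ and $\tilde{T}_i=P T_i (WS)$. I would therefore substitute $M:=WS$ as a new random variable in place of $S$. For invertible $W$ the map $S\mapsto WS$ is a bijection of the mask space onto itself, and — this is the load-bearing property — it preserves the sampling distribution of $S$, so $M$ has exactly the same law as $S$, stays independent of $(P,Q,\{T_i\})$, and its distribution does not depend on the particular $W$. After the substitution the observables read $\tilde{W}=Q^{-1}M$, $\tilde{T}_i=P T_i M$, and $\tilde{X}_i=P(X_i-T_i)Q$, none of which mentions $W$.

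Next I would discharge the adaptivity permitted by the chosen-plaintext model. Because the reparametrization is global (a single change of variable in $S$), after applying it the entire interaction — including each adaptively chosen $X_i$, which is a function only of the previous, already $W$-free observations and the adversary's own coins — becomes a deterministic function of $(P,Q,M,\{T_i\}_{i=1}^{t})$ together with adversarial randomness. Since the joint law of $(P,Q,M,\{T_i\})$ is identical for every admissible $W$, a short induction on the round index shows the transcript law is identical for every $W$, which gives the claimed identity for all polynomial $t$.

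The main obstacle is not the algebra but pinning down the probabilistic hypothesis that carries the proof: the identity holds only if the mask distribution is invariant under left multiplication by an invertible matrix, i.e. $WS \stackrel{d}{=} S$. This is true when $P,Q,S$ are sampled uniformly from the group of invertible matrices (Haar-uniform on $GL_n$, or uniform on $GL_n(\mathbb{F}_q)$ in the finite-field model), which I read as the intended meaning of ``uniformly sampled invertible matrices''; it would \emph{fail} for i.i.d.\ entrywise $U(-1,1)$ masks, for which $\det(\tilde{W})$ already leaks $\det(W)$, so I would flag this caveat explicitly. The argument also uses invertibility of $W$ to make $S\mapsto WS$ a bijection, so I would either restrict to full-rank $W$ or, for a rectangular first-layer weight, replace the single substitution by the analogous measure-preserving change of variables on the appropriate factor and verify it still removes $W$ from $\tilde{W}$ and $\tilde{T}_i$ simultaneously.
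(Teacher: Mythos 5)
Your proposal is correct under its stated hypotheses, but it takes a genuinely different route from the paper's own proof. The paper argues round by round with conditional entropies: it sets $X_i=0$ ``without loss of generality,'' asserts $H(W\mid\tilde{W})=H(W)$ on the grounds that every candidate $W$ admits preimages $Q',S'$ with $W=Q'\tilde{W}(S')^{-1}$, dismisses $\tilde{X}_i$ and $\tilde{T}_i$ similarly (for $\tilde{T}_i$ by exhibiting, for any $W_1,W_2$, a $T_i'$ with $PT_i'W_1S=PT_iW_2S$), and closes with the chain rule over the $t$ rounds. You instead prove the stronger statement that the entire transcript law is identical for all $W$, via the single change of variable $M:=WS$, which buys you two things the paper's write-up lacks. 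First, a clean treatment of adaptivity: the paper's reduction to $X_i=0$ silently assumes the adversary's adaptive input choices are irrelevant, whereas your induction makes the simulation explicit. Second, you isolate the exact probabilistic hypothesis doing the work --- left-translation invariance of the mask law, $WS\overset{d}{=}S$ --- which is precisely what the paper's ``preimages exist, hence entropy is unchanged'' step tacitly uses (mere existence of preimages does not flatten the posterior without this invariance, and the same invariance of $T_i\mapsto T_i'$ underlies the paper's $\tilde{T}_i$ step). Your two caveats are substantive rather than pedantic: the scheme as implemented samples masks entrywise from $U(-1,1)$ (Section 4), under which $\det(\tilde{W})=\det(W)\det(S)/\det(Q)$ has a $W$-dependent law, so the exact-zero mutual-information claim fails as literally stated and only the group-uniform reading of ``uniformly sampled invertible matrices'' is sound --- and since left Haar measure on $GL_n(\mathbb{R})$ is not a probability measure, the finite-group model (e.g., uniform on $GL_n(\mathbb{F}_q)$) is the only fully rigorous instantiation; moreover, for rectangular or singular $W$ one has $\mathrm{rank}(\tilde{W})=\mathrm{rank}(W)$ revealed outright, so the theorem additionally needs the prior on $W$ restricted to a fixed-rank (e.g., full-rank) class, a condition the paper never states but equally requires.
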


\begin{theorem}\label{theorem:2}
(Intermediate Non-linear Obfuscation) Let $X$ denote the original input matrix of the non-linear layer. Assume that $P, Q$ are independently and uniformly sampled invertible matrices, and the random matrices $\{\pi_j\}_{j=1}^4$ and $\{R_j\}_{j=1}^3$ are independently and uniformly selected for each round. So, it holds that:
\[
I(X; \mathcal{O}) = 0,
\]
where $\mathcal{O} = \{\tilde{X}, M_1, M_2, M_3, M_4, R_2\}$.
\end{theorem}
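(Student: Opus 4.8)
The plan is to prove the equivalent statement that the conditional law $p(\mathcal{O}\mid X=x)$ does not depend on $x$, since $I(X;\mathcal{O})=0$ is equivalent to $\mathcal{O}$ being independent of $X$. First I would split the observable tuple into the input-independent part $\mathcal{M}=\{M_1,M_2,M_3,M_4,R_2\}$ and the single input-dependent part $\tilde{X}=PXQ$. Every $M_j$ and $R_2$ is a deterministic function of the fresh seeds $(P,Q,\{\pi_i\}_{i=1}^{4},R_1,R_3)$ and never touches $X$, so $\mathcal{M}$ is independent of $X$ and $I(X;\mathcal{M})=0$. By the chain rule $I(X;\mathcal{O})=I(X;\mathcal{M})+I(X;\tilde{X}\mid\mathcal{M})=I(X;\tilde{X}\mid\mathcal{M})$, so it remains only to prove that, conditioned on the published masks, $\tilde{X}$ still carries no information about $X$.

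The natural tool is the matrix one-time-pad fact already underlying the linear-layer and input-obfuscation arguments: for uniform invertible $P,Q$ independent of $X$, the law of $PXQ$ is determined by $\mathrm{rank}(X)$ alone. I cannot apply it verbatim here because $P$ reappears inside $M_1,M_3$ and $Q$ inside $M_2,M_4$, and the entire content of the theorem is that this reuse leaks nothing. The concrete plan is a re-randomization (simulator) argument: I would construct, for any pair of admissible inputs, a measure-preserving change of the seeds $(P,Q,\pi_i,R_1,R_3)$ that leaves every mask in $\mathcal{M}$ and $R_2$ fixed while re-expressing $\tilde{X}=PXQ$ as $P'X'Q'$ for the other input, thereby showing the conditional density of $\tilde{X}$ given $\mathcal{M}$ is flat across inputs. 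The freedom that powers this re-randomization must be precisely the seed directions that $\mathcal{M}$ leaves undetermined.

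The main obstacle is therefore to show that conditioning on $\mathcal{M}$ does not collapse the uncertainty in $(P,Q)$ --- equivalently, that $\mathcal{M}$ does not reveal $P$ or $Q$, since exact knowledge of both would invert $\tilde{X}=PXQ$ and make the theorem false. I would argue the under-determination from three blinding mechanisms acting together: (i) the selectors $E_1,E_2$ in $M_3=P\pi_1^{T}E_1\pi_3^{T}$ and $M_4=\pi_4^{T}E_2\pi_2^{T}Q$ are rectangular and non-invertible, so these masks constrain $P,Q$ only on a projected subspace; (ii) in $M_1=\pi_3(\pi_1P^{-1}\otimes R_1)$ and $M_2=(Q^{-1}\pi_2\otimes R_3)\pi_4$ the factors $P^{-1},Q^{-1}$ are entangled by a Kronecker product with the fully hidden $R_1,R_3$ and then scrambled by a hidden permutation $\pi_3$ (resp.\ $\pi_4$), which destroys the clean block structure needed to factor a Kronecker product and hence prevents isolating $P^{-1},Q^{-1}$; and (iii) the permutations $\pi_1,\pi_2$ place every mask in a hidden coordinate frame. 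I expect the hard step to be turning this qualitative blinding into a quantitative statement: packaging (i)--(iii) into a lemma that the fiber of the map $(P,Q,\pi_i,R_1,R_3)\mapsto\mathcal{M}$ carries a full two-sided $(P,Q)$-orbit, and then verifying measure-preservation by an explicit Jacobian computation on the continuous coordinates with the discrete $\pi_i$ fixed within each fiber. I would also note that $R_2$, though revealed, is independent of $(P,Q)$ and of $X$ and enters only through the hidden product $\bar{R}=R_1R_2R_3$, so its disclosure is harmless.

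Finally I would flag the same rank subtlety as in the linear-layer analysis: a two-sided matrix pad preserves $\mathrm{rank}(X)$, so the exact equality $I(X;\mathcal{O})=0$ is cleanest on the full-measure stratum of full-rank inputs. I would either restrict the statement to that stratum or absorb the measure-zero degenerate cases, consistent with the matrix-masking conventions used earlier in the paper.
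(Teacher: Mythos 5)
Your chain-rule reduction to $I(X;\tilde X\mid \mathcal M)$ with $\mathcal M=\{M_1,M_2,M_3,M_4,R_2\}$ is sound, and you correctly identify the crux that the paper's own proof skips over: the paper establishes $H(X\mid\tilde X)=H(X)$ and then simply \emph{asserts} that the masks contribute nothing because each contains fresh uniform factors, whereas the real question is whether conditioning on $\mathcal M$ collapses the adversary's uncertainty about $(P,Q)$. Your framing is more honest than the paper's at this point. However, your completion hinges entirely on an unproven lemma --- that the fiber of $(P,Q,\{\pi_i\},R_1,R_3)\mapsto\mathcal M$ carries a full two-sided $(P,Q)$-orbit --- and that lemma is false for the masks as the paper defines them, so the proposed route cannot be carried out.

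Concretely, both of your qualitative blinding claims (i) and (ii) fail. For (i): since $\bar R$ has exactly one entry equal to $1$, the selector has the structure $E_1=I\otimes e_{k_0}^{T}$ (it extracts the slice of the Kronecker expansion sitting against that unit entry), so $M_3=P\pi_1^{T}E_1\pi_3^{T}=\bigl(P\pi_1^{T}\otimes e_{k_0}^{T}\bigr)\pi_3^{T}$, and the multiset of \emph{nonzero columns} of $M_3$ is exactly the multiset of columns of $P$; far from constraining $P$ only on a projection, $M_3$ pins $P$ up to a column permutation (and $M_4$ pins $Q$ symmetrically). For (ii): each row of $M_1=\pi_3(\pi_1P^{-1}\otimes R_1)$, reshaped into a matrix, is rank one of the form $a_i b_k^{T}$ with $a_i$ a row of $\pi_1P^{-1}$ and $b_k$ a row of $R_1$; generically one clusters the rows by the spans of these rank-one factors and reassembles $P^{-1}$ up to a row permutation and a single global scalar --- a hidden permutation does not destroy Kronecker factorability, because row permutations preserve the multiset of rank-one reshapes. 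Hence conditioning on $\mathcal M$ reduces $(P,Q)$ to a discrete permutation(-and-scale) orbit rather than a full orbit of invertible pairs, no measure-preserving re-randomization of the kind you plan exists, and $\tilde X=PXQ$ then determines $X$ up to row/column permutations and a scalar. Executed honestly, your plan would not merely stall at the ``hard step''; it would refute the lemma it needs, and the paper's proof offers no argument at the same point from which you could repair it. (Your side remark about the rank stratification of the two-sided pad is correct, and is indeed glossed over by the paper's ``spans the entire space'' claim, but it is minor compared with the collapse of the $(P,Q)$-fiber.)
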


\begin{theorem}\label{theorem:3}
(Intermediate Linear Obfuscation) Let $X$ denote the original input matrix of the non-linear layer, $W$ denote the weight matrix of the intermediate linear layer. Assume that $P$, $Q$, and $S$ are independently and uniformly sampled invertible matrices. Then, for any polynomial number of rounds $t$, it holds that:
\[
I(W; \mathcal{O}_1, \dots, \mathcal{O}_t) = 0,
\]
where $\mathcal{O}_i = \{\tilde{X}_i, \tilde{W}\}$ with $\tilde{X}_i = PY_iQ$ and $\tilde{W} = Q^{-1}WS$.
\end{theorem}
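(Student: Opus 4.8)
The plan is to prove the stronger statement that the conditional distribution of the observation tuple $(\tilde{X}_1,\dots,\tilde{X}_t,\tilde{W})$ given any fixed weight $W=w$ does not depend on $w$; since $I(W;\mathcal{O}_1,\dots,\mathcal{O}_t)=0$ holds exactly when the observations are independent of $W$ in distribution, this immediately yields the claim. Before starting I would condition on the layer inputs $\{Y_i\}_{i=1}^t$: these are outputs of the preceding layers and hence functions of earlier weights and the adversary's chosen inputs, independent of the current weight $W$ and of the masks $P,Q,S$. It therefore suffices to show that the conditional law of the tuple is $w$-free after conditioning on $\{Y_i\}$.

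The key structural observation is that the fresh mask $S$ enters only through $\tilde{W}=Q^{-1}WS$ and never in any $\tilde{X}_i=PY_iQ$. I would exploit the right-invariance of Haar (uniform) measure on the group of invertible matrices: for invertible $W$ the map $S\mapsto WS$ is a measure-preserving bijection, so $WS$ is Haar-distributed and, crucially, its distribution is completely independent of $w$. Left-multiplying by the conditioned matrix $Q^{-1}$ preserves this, so conditioned on $(Q,W=w)$ the matrix $\tilde{W}$ is Haar-uniform with a law that depends neither on $w$ nor on $Q$. This is the step that ``absorbs'' $W$ into the randomness of $S$, and it mirrors the corresponding step in the proof of Theorem~\ref{theorem:1}, with the one-time pad $T_i$ simply absent here.

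Given this, I would assemble the full joint law as follows. Fixing $\{Y_i\}$ and $w$, the tuple $\{\tilde{X}_i\}$ is a deterministic function of $(P,Q)$ alone, while $\tilde{W}$ is a deterministic function of $(Q,S)$; since $S$ is independent of $(P,Q)$, conditioning on $(P,Q)$ leaves $\tilde{W}$ Haar-uniform and (trivially) independent of the now-fixed $\{\tilde{X}_i\}$. Because this conditional law of $\tilde{W}$ is the same fixed Haar distribution for every value of $(P,Q)$, the matrix $\tilde{W}$ is in fact independent of $(P,Q)$ and hence of $\{\tilde{X}_i\}$ marginally. Integrating out $(P,Q)$, the marginal of $\{\tilde{X}_i\}$ depends only on $\{Y_i\}$ and the marginal of $\tilde{W}$ is the fixed Haar law, so the joint conditional law factorizes into two $w$-free distributions, establishing the desired independence.

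The main obstacle I anticipate is justifying that $S$ truly absorbs all of $W$: the invariance $WS\sim\mathrm{Haar}$ requires $W$ to be square and full rank. If $W$ were rank-deficient, the orbit of $W$ under $(Q,S)\mapsto Q^{-1}WS$ would be confined to matrices of the same rank, so $\tilde{W}$ would leak $\operatorname{rank}(W)$ and the claim would fail; I would therefore either invoke the standing assumption that the sampled masks---and the weights they conceal---are square invertible, or explicitly restrict the statement to full-rank $W$. A secondary point needing care is the correlation introduced by the shared factor $Q$ in both $\tilde{X}_i$ and $\tilde{W}$: the argument must show that the fresh independent mask $S$ decouples $\tilde{W}$ from $Q$ despite this sharing, which is precisely what the Haar-invariance step above delivers.
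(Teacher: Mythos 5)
Your proof is correct under the same idealization the paper itself relies on, but it takes a genuinely different route. The paper's proof is an entropy-conditioning chain: it first asserts $H(W \mid \tilde{W}) = H(W)$ on the grounds that the preimage of any observed $\tilde{W} = Q^{-1}WS$ under invertible $(Q', S')$ ``spans the entire space,'' then disposes of the terms $\tilde{X}_i = PX_iQ$ by invoking Theorem~\ref{theorem:2} (the inputs are outputs of the preceding non-linear layer, about which the adversary learns nothing), and finally chains over the $t$ rounds. You instead prove distributional independence directly: you condition on the $W$-independent inputs $\{Y_i\}$, absorb $W$ into the fresh mask via invariance of the mask's law under $S \mapsto WS$, note that this makes the conditional law of $\tilde{W}$ a fixed distribution free of both $w$ and $Q$, and then factorize the joint law because $\{\tilde{X}_i\}$ is a function of $(P,Q,\{Y_i\})$ alone. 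Your route buys two things the paper leaves implicit. First, it confronts head-on the coupling through the shared factor $Q$ appearing in both $\tilde{X}_i$ and $\tilde{W}$; the paper's step $H(W \mid \tilde{X}_i, \tilde{W}) = H(W \mid \tilde{W})$ simply asserts that $\tilde{X}_i$ adds nothing, whereas your invariance argument shows $\tilde{W}$ is decoupled from $Q$ outright, which is exactly what justifies that assertion. Second, your rank caveat exposes a real soft spot in the paper's own argument: $Q^{-1}WS$ preserves $\operatorname{rank}(W)$, so the preimage of a rank-deficient $\tilde{W}$ does \emph{not} span the whole space and $\operatorname{rank}(W)$ (together with the matrix dimensions) is genuinely leaked; both proofs silently need the full-rank assumption you make explicit. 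One caution on your own write-up: there is no uniform probability (Haar) measure on the non-compact group of real invertible matrices, and the scheme actually samples entries from $U(-1,1)$, whose law is not invariant under $S \mapsto WS$; so your invariance step --- like the paper's ``uniformly sampled invertible matrices'' --- is an idealization, and the exact statement $I(W;\mathcal{O}_1,\dots,\mathcal{O}_t)=0$ holds only under such an invariant sampling model, a limitation your proof shares with, rather than adds to, the paper's.
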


\begin{proof}[Proof (Sketch).]

For the input layer and the $i$-th linear layer, the adversary can observe primarily two types of information: the obfuscated weight matrix $\tilde{W}$ and the obfuscated input $\tilde{X}$ to that layer. For $\tilde{W}$, since the original $W$ could be any matrix in the entire space with equal probability due to the randomness of $Q_i$ and $Q_{i+1}$, we maintain the entropy $H(W\mid \tilde{W}) = H(W)$. For $\tilde{X}$, it is obfuscated by a fresh and uniformly random one-time mask during each inference (where the mask consists of the padding matrix $T$ for the input layer, and $M_1$ and $M_2$ from the previous non-linear layer for others). The mask renders $\tilde{X}$  statistically independent of $\tilde{W}$. Since the adversary does not know the mask, this observation provides no additional information about $\tilde{W}$. 
For non-linear layers without weight matrices, a similar analysis demonstrates that the adversary cannot extract information about the true input (i.e., the output from the previous linear layer) from the obfuscated input $\tilde{X}$ of that layer. Therefore, for each query to the obfuscated model, no individual layer leaks information about $W$.
Formally, the mutual information between $W$ and the observations of the adversary is zero. Each layer in the model achieves perfect information-theoretic security.
\end{proof}

Due to space limitations, detailed proofs of the above theorems are provided in Appendix~\ref{sec.ana}. 

These theorems establish that an adversary with access to all observable variables gains zero information about the weight matrix $W$ of each individual layer. When multiple layers are combined, sequential execution without parallel operations ensures that no cross-layer information leakage occurs. Hence, composing individual layers to form the complete model will not compromise the overall security. This aligns with the standard sequential composition theorem~\cite{lindell2017tutorials}. Further, as the information gain about $W$ from a single query is zero, even after an arbitrary number of queries, the adversary cannot obtain additional information about $W$. As a result, the knowledge exposed from the obfuscated model is equivalent to that of a completely black-box, label-only model.

Note that the security guarantee holds even against a malicious adversary. The adversary may choose arbitrary inputs to perform chosen-plaintext attacks or known-plaintext attacks. However, due to the presence of one-time masks for the input of each layer, any input chosen by the adversary will be transformed into random values, making them indistinguishable from any other input. In Appendix~\ref{sec.ana}, we show that even in the extreme case where the input is an all-zero matrix, the adversary also cannot obtain information about $W$.

\begin{figure*}
    \centering
    \includegraphics[width=0.95\textwidth]{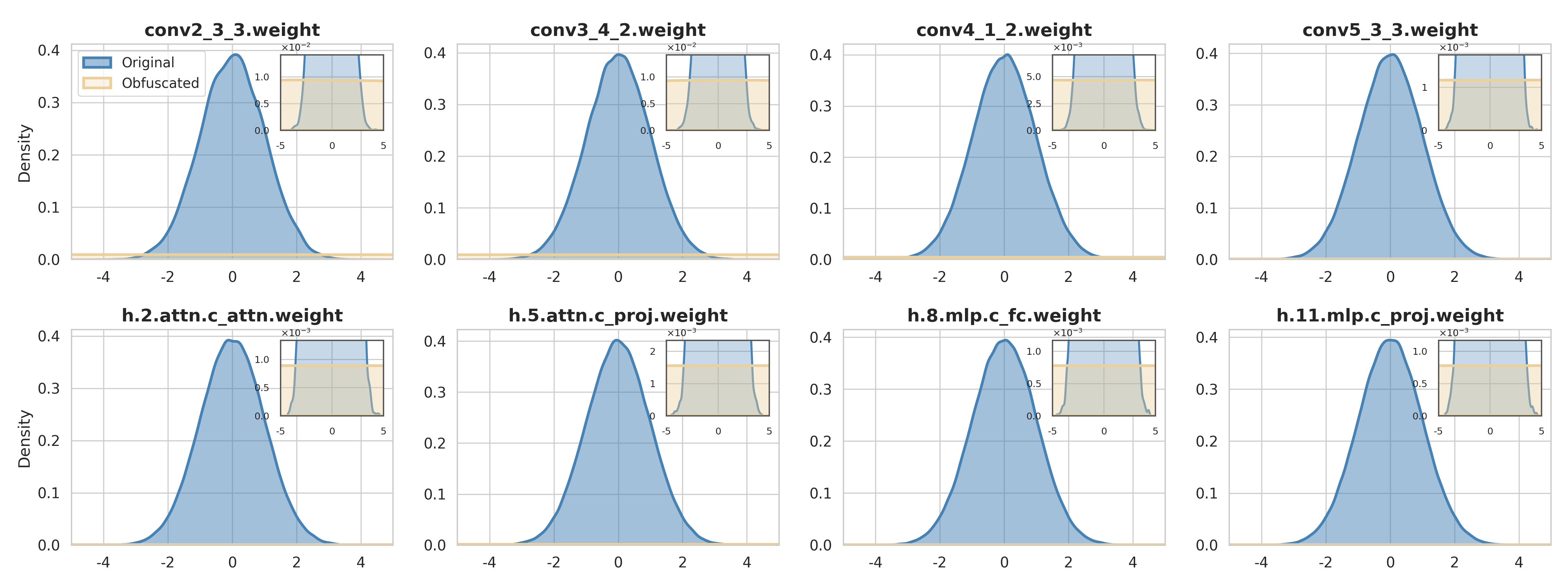} 
    \caption{Distribution of masked parameter and public parameter.}
    \label{fig:histogram}
\end{figure*}

\begin{figure}
    \centering
    \includegraphics[width=0.95\linewidth]{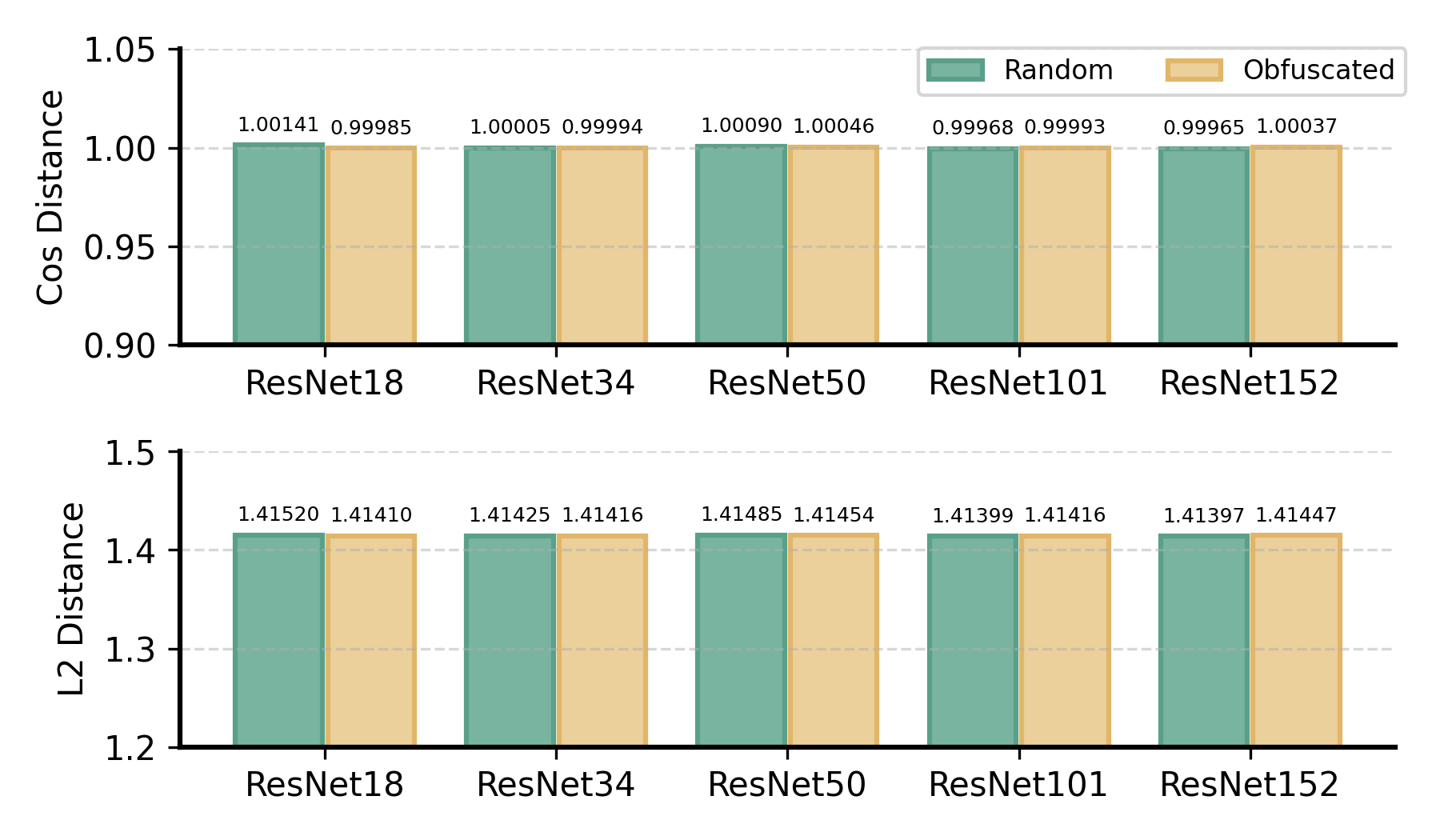} 
    \caption{The statistic comparison between Amulet obfuscated weights vs. original weights and random matrices vs. original weights.}
    \label{fig:statistic_ana}
\end{figure}

\noindent
\textbf{Potential Vulnerabilities.}
Recent works have explored potential vulnerabilities in prior obfuscation-based solutions. These works can be categorized into two types: statistical analysis of obfuscated parameters~\cite{zhang2024no, wang2025arrow}, and side-channel attacks targeting TEEs~\cite{yuan2024hypertheft}. For the former, the adversary may recover weights by leveraging distribution similarity between public and obfuscated private models~\cite{zhang2024no}, or by analyzing the directional similarity of weights before and after obfuscation~\cite{wang2025arrow}.
However, as demonstrated in our previous analysis, in \MyScheme{}, the information accessible to adversaries—namely, the obfuscated weights and intermediate results—has zero mutual information with the original weights. In other words, the adversary cannot leverage known information to infer details about either the original weights or the masks, rendering such attacks ineffective against our scheme.
In Fig.~\ref{fig:histogram} and ~\ref{fig:statistic_ana}, we analyze the parameter statistical properties of the model before and after obfuscation. We observe that the two distributions exhibit significantly different statistical characteristics. This result intuitively demonstrates that the obfuscated parameters have lost their correlation with the original parameters, thereby providing resistance against statistical analysis-based attacks.

Side-channel vulnerabilities remain a critical concern for TEE-based systems, with recent studies demonstrating the feasibility of attacking TEE-shielded DNN models~\cite{yuan2024hypertheft}. While existing countermeasures—such as memory randomization\cite{wichelmann2024obelix} and system obfuscation\cite{duy2025incognitos}—effectively reduce such leakage, \MyScheme{} is designed to be orthogonal to these techniques. By integrating \MyScheme{} with these mitigation strategies, we can establish a defense-in-depth architecture that offers comprehensive protection for DNNs within TEEs

Generally, by minimizing TEE usage and employing a provably secure obfuscation scheme, \MyScheme{} delivers enhanced security guarantees relative to other solutions, while remaining resilient against both types of attack methodologies.
\section{Evaluation}
We evaluate the performance of \MyScheme{} across four dimensions: inference efficiency, accuracy, preprocessing overhead, and storage overhead.

\begin{figure*}
    \centering
    \includegraphics[width=0.95\textwidth]{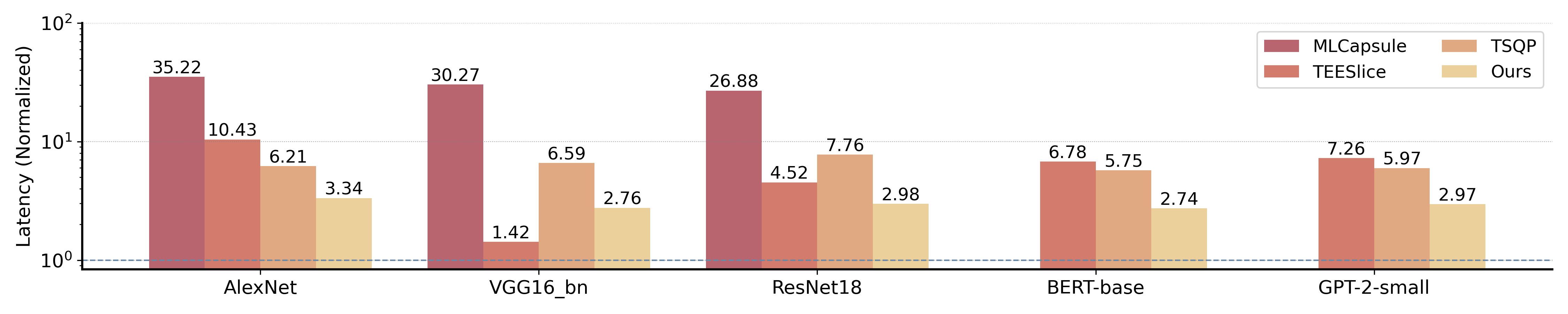} 
    \caption{Inference Latency across different models and methods. The results are normalized to the inference latency of the unprotected model (blue line). Since the Transformer model is too large to entirely load into the SGX enclave on our device, the corresponding results of MLCapsule are not included.}
    \label{fig:latency}
\end{figure*}
\begin{figure*}
    \centering
    \includegraphics[width=0.95\textwidth]{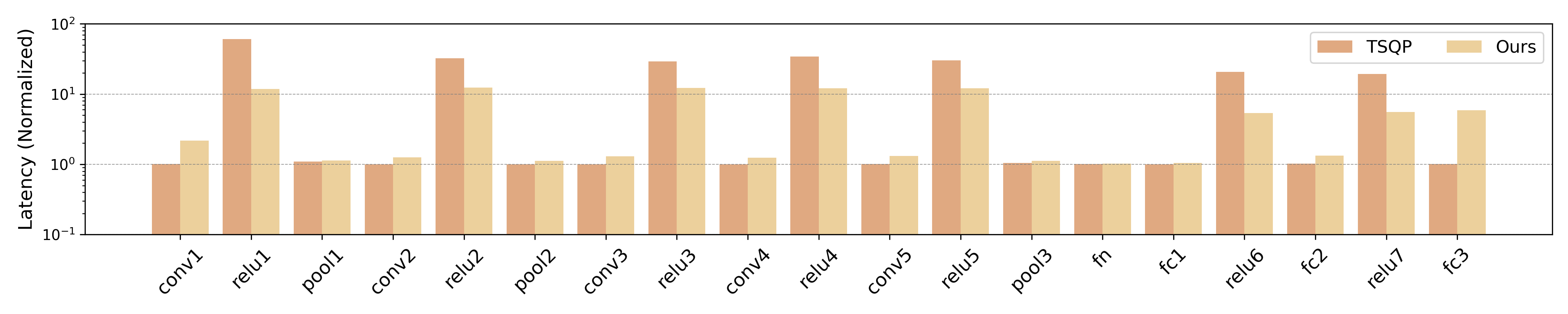} 
    \caption{Layers-to-layers inference latency of AlexNet. The results are normalized to the latency on the unprotected model. }
    \label{fig:latency2}
\end{figure*}
\begin{figure}
    \centering
    \includegraphics[width=0.98\linewidth]{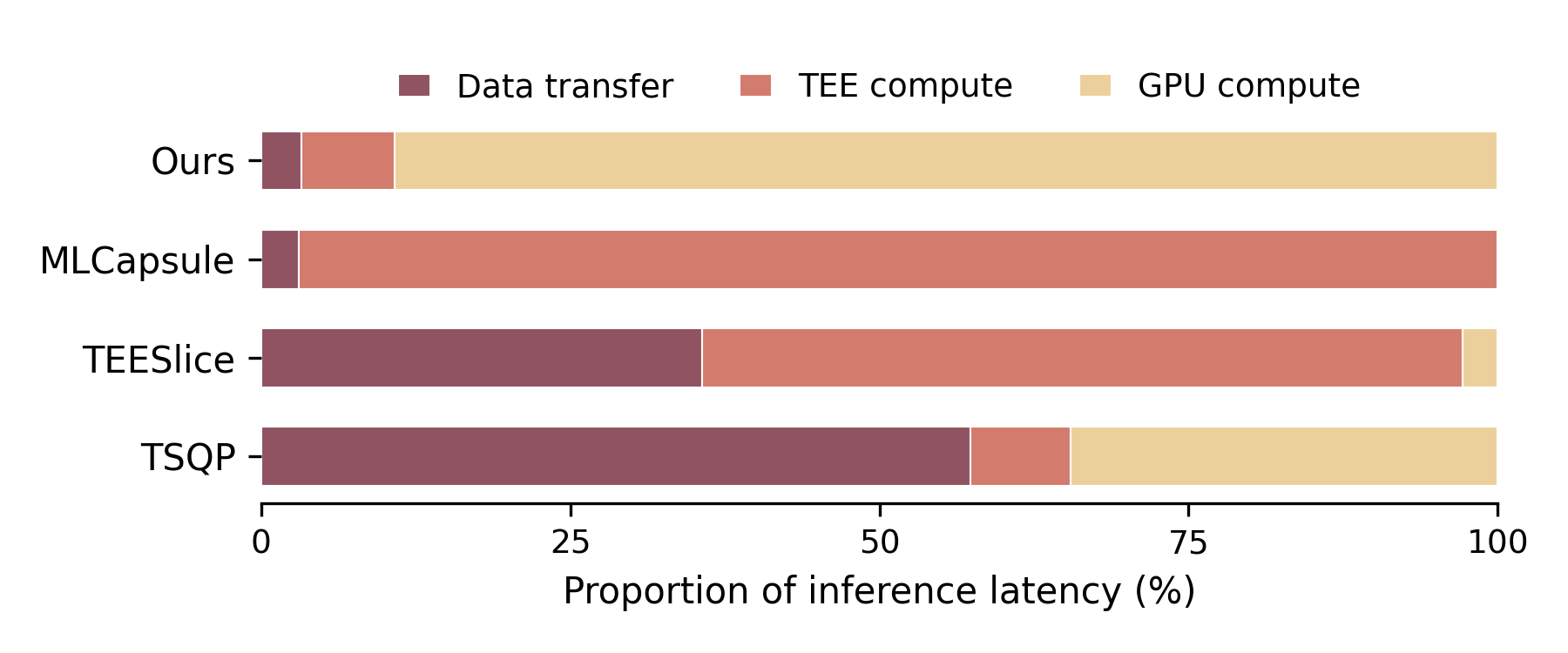} 
    \caption{Breakdown of inference latency across different methods.}
    \label{fig:breakdown}
\end{figure}

\subsection{Experimental Configuration}
In this paper, we use Intel SGX as an example of TEE implementation, primarily due to its accessibility and developer-friendly environment. Note that since our work focuses on utilizing TEEs as secure enclaves rather than delving into their internal implementation details, our approach can be applied to other TEEs without requiring new designs.

All experiments were conducted on a laptop equipped with an Intel Core i7-10870H CPU with SGX1 support, an NVIDIA GeForce RTX 2070s GPU with 8GB VRAM, and 16GB of system memory. The implementation consists of approximately 4,000 lines of code in PyTorch and C++, modified from TEESlice~\cite{zhang2024no}.

\noindent
\textbf{Models and Datasets.} We evaluate \MyScheme{} on 12 popular models, including AlexNet~\cite{krizhevsky2012imagenet}, VGG16 with BatchNorm~\cite{simonyan2015very}, ResNet-(18, 34, 50, 101, 152)~\cite{he2016deep}, BERT~\cite{devlin2019bert}, and GPT-2-(small, medium, large, xl)~\cite{radford2019language}. For AlexNet, VGG16, and ResNet-18, we use the CIFAR-10 and CIFAR-100 datasets~\cite{krizhevsky2009learning}. For BERT and GPT-2, we use the SST2 dataset~\cite{wangglue}. We compare the performance of our approach against other baseline methods across these models. Additionally, we conduct experiments on the ResNet series using ImageNet~\cite{russakovsky2015imagenetdataset} to assess how model size affects performance while maintaining similar architectural structures.

\subsection{Inference Latency}

\textbf{Overall Performance.} Inference latency is the most critical performance metric for evaluating our approach. We compared \MyScheme{} with TSQP\footnote{Since TSQP is optimized for QNNs, for fair comparison, we replace all operations in TSQP with standard floating-point operations to ensure it performs the same computational tasks as other methods.}, TEESlice, and MLCapsule. Additionally, we established performance boundaries by measuring two baselines: executing the entire inference within the TEE as a lower bound, and using GPU on unprotected models as an upper bound. By default, all experiments are conducted with an input batch size of 1. Note that for the MLCapsule approach, which utilizes TEE to complete the entire inference process, the Transformer model was excluded from the performance comparison results due to its substantial memory requirements exceeding the capacity of our device's enclave.

\MyScheme{} achieves an 8$\times$ to 9$\times$ speedup compared to the baseline method that completes the inference process entirely within the TEE. Under identical conditions, it is only 2.8$\times$ to 4.8$\times$ slower than unprotected model inference with GPU acceleration. Furthermore, across all tested model architectures, \MyScheme{} outperforms existing partition-based methods, delivering 1.5$\times$ to 3$\times$ faster latency compared to these approaches.

\begin{table}[ht!]
\caption{For the preprocessing phase, we mainly focus on the additional storage overhead introduced by model obfuscation. For the inference phase, we mainly focus on the additional computational overhead resulting from obfuscation.}
\label{tab:ops}
\centering
\resizebox{0.9\linewidth}{!}{
    \begin{tabular}{l|c|c}
        \toprule
            \textbf{Layers} & \makecell{\textbf{Extra storage} \\ \textbf{in Preprocessing}} & \makecell{\textbf{Extra operation} \\ \textbf{in Inference}} \\
        \midrule
        Conv/Dense & 1 Expanded Bias & None \\
        BatchNorm & 1 Expanded Bias & None \\
        Avgpool  & None & None \\
        Flatten  & None & None \\
        ReLU & 5 Matrices & 4 Mult + 2 Kron \\
        GELU & 5 Matrices & 4 Mult + 1 Kron \\
        MHA & 4 Expanded Biases & None \\
        LayerNorm & 1 Matrix & 1 Mult \\
        \bottomrule
    \end{tabular}
}
\end{table}

\noindent
\textbf{Detailed Analysis.}
In Table~\ref{tab:ops}, we summarize the additional overhead introduced by \MyScheme. For the preprocessing phase, since this overhead is relatively insensitive to time constraints, especially considering that the obfuscation of linear layer model parameters represents a one-time cost, we primarily focus on the additional storage requirements. For the inference phase, since \MyScheme does not require loading the model into the TEE, the extra memory usage is not significant; instead, we concentrate on computational overhead. As can be observed, linear layers incur minimal additional storage and computational costs. While non-linear layers introduce more substantial overhead, our approach still demonstrates performance advantages compared to baseline methods.

Fig.~\ref{fig:latency} shows the overall inference latency of \MyScheme{} and baseline methods across different models. The results are transformed as the ratio of inference latency between obfuscated and unprotected models. As can be seen, \MyScheme{} incurs only 2.8-4.8$\times$ overhead compared to direct plaintext inference on GPU. Compared to MLCapsule, which executes inference entirely within TEEs, \MyScheme{} achieves approximately 8-9$\times$ performance improvement. Moreover, compared to the recent TEESlice that protects only partial layers and the obfuscation-based TSQP, \MyScheme{} demonstrates approximately 2-3$\times$ improvement.
Fig.~\ref{fig:latency2} compares the layer-wise latency of \MyScheme{} and TSQP across all layers in AlexNet. For linear layers, both methods exhibit similar latency as they utilize GPU acceleration for these computations. However, for non-linear layers, \MyScheme{} demonstrates a significant performance advantage, as TSQP requires transferring non-linear layer inputs to the TEE for processing, resulting in time wasted on data transfer.

We further make a more in-depth analysis of the overhead of non-linear layers. Since most existing partition-based methods implement non-linear layers through trivial execution within TEEs, we used this approach as our baseline. Fig.\ref{fig:act} presents the running time of both methods for ReLU and GELU, the two most commonly used non-linear activation functions, across varying input sizes. Note that for the TEE-based method, the overhead includes both data transfer time between untrusted and trusted memory and the computation time of TEE, whereas our \MyScheme{} only incurs the time required for GPU-based computation of these operations. As demonstrated, our method achieves a 1.5-3$\times$ speedup compared to the basic approach. However, as the input size increases, our advantage gradually diminishes due to the additional matrix multiplication and Kronecker product operations required by our method.

Furthermore, to fully illustrate the benefits of \MyScheme, we conducted a more detailed analysis of the composition of inference time overhead. Fig.~\ref{fig:breakdown} presents the CNN model's proportional time spent by \MyScheme{} and several baseline methods on three components: data transfer between untrusted and trusted memory, TEE computation, and GPU computation. Notably, other methods spend considerable time on data transfer, with TEE handling numerous computational steps. In contrast, our method only requires minimal information transfer (input and output results) between untrusted and trusted memory and eliminates the need for TEE involvement during the inference phase, thereby more effectively leveraging GPU acceleration capabilities. These two factors are the primary drivers of our performance improvements.

\begin{figure}
    \centering
    \includegraphics[width=0.85\linewidth]{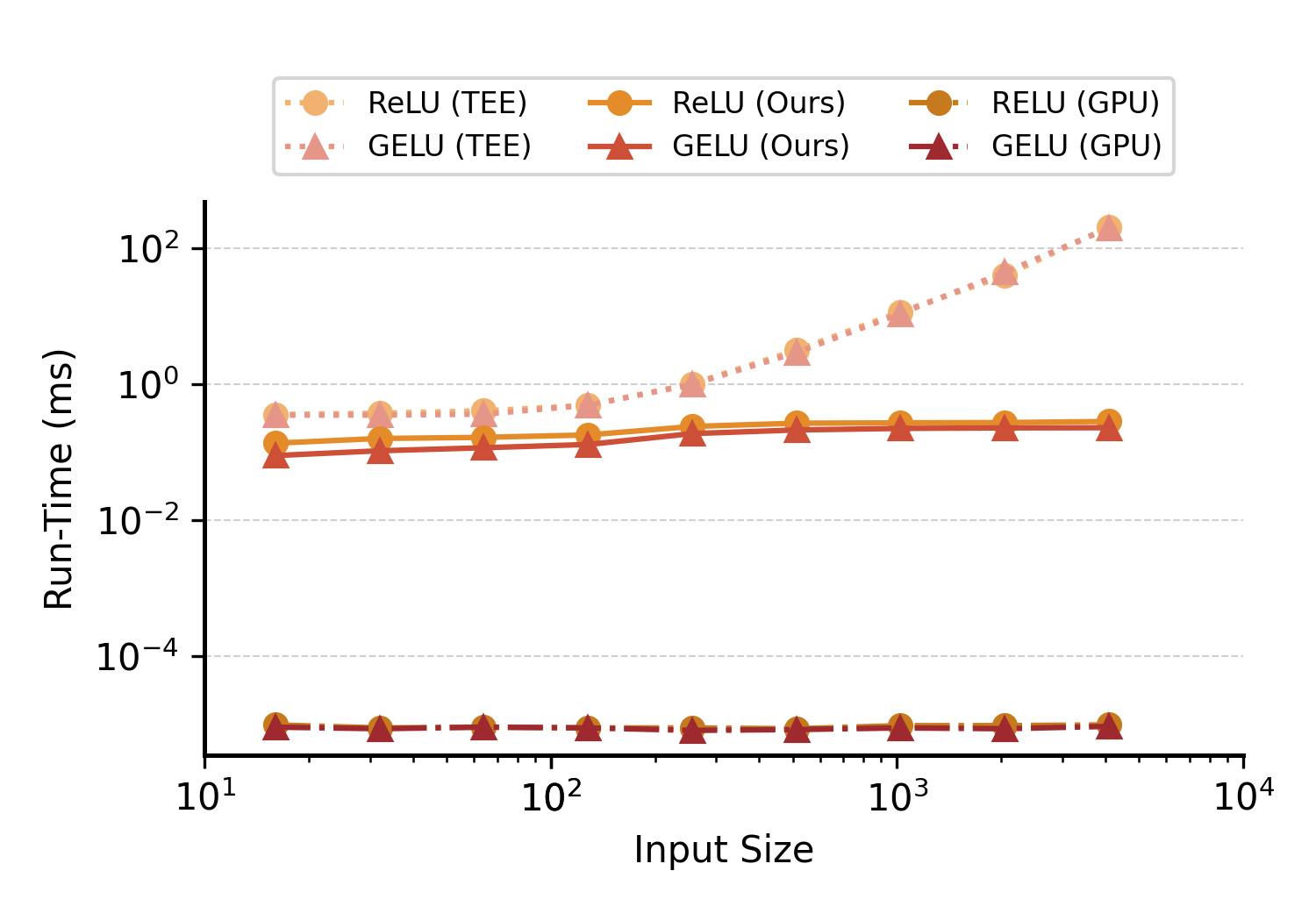} 
    \caption{The runtime of ReLU and GELU operations across varying input sizes.}
    \label{fig:act}
\end{figure}

\noindent
\textbf{Scalability.}
Here we analyze in detail the performance of \MyScheme across models of different scales. Notably, for models of varying scales, \MyScheme maintains consistent inference latency overhead of approximately 2.8-4.8$\times$ relative to plaintext inference, without significant increases due to model size. This result primarily stems from the constant number of interactions between untrusted and trusted memory in our scheme, and the predictable additional overhead for the obfuscated model. Specifically, as we summarized in Table~\ref{tab:ops}, for linear layers, \MyScheme introduces no additional operations compared to plaintext inference, thus no extra overhead. For non-linear layers, the results in Fig.~\ref{fig:act} show that as input dimensions increase, \MyScheme and plaintext inference maintain essentially the same growth rate in runtime. 

Fig.~\ref{fig:resgpt} presents the impact of model scale on inference latency. As observed, for GPT-2 family models ranging from the smallest GPT-2-small to the largest GPT-2-xl, the latency remains at approximately 2.9$\times$ that of plaintext inference, while ResNet models exhibit at most 4.8$\times$ overhead, demonstrating the scalability of our scheme.

\begin{table*}[ht]
    \centering
    \caption{Inference accuracy of different models with batch\_size=1.}
    \label{tab:accuracy} 
    \resizebox{0.95\textwidth}{!}{
        \begin{tabular}{l|cccccccccccc} 
            \toprule 
            \multirow{2}{*}{Models} & \multirow{2}{*}{AlexNet} & \multirow{2}{*}{\makecell[c]{VGG 16\\with bn}} & \multicolumn{5}{c}{ResNet} & \multirow{2}{*}{\makecell[c]{BERT\\base}} & \multicolumn{4}{c}{GPT-2} \\ \cmidrule(r){4-8} \cmidrule(r){10-13}
            & & & 18 & 34 & 50 & 101 & 152 & & -s & -m & -l & -xl \\
            \midrule 
            Difference & 1.9e-4 & 1.3e-4 & 1.4e-4 & 2.3e-4 & 3.4e-4 & 3.8e-4 & 3.6e-4 & 4.0e-4 & 2.7e-08 & 5.8e-08 & 8.9e-08 & 1.8e-7 \\
            Class Acc. & 100\% & 100\% & 100\% & 100\% & 100\% & 100\% & 100\% & 100\% & 100\% & 100\% & 100\% & 100\%  \\
            \bottomrule 
        \end{tabular}
    }
\end{table*}

\begin{table}[ht]
    \centering
    \caption{Accuracy of different model protection methods on three models (AlexNet, VGG16\_bn, ResNet-18) evaluated on the CIFAR-10 and CIFAR-100 datasets.}
    \captionsetup{skip=8pt}
    \label{tab:bigacc}
    
    \resizebox{0.95\linewidth}{!}{
        \begin{tabular}{l|cccc}
        \toprule
        Model & No-shield & TEESlice & TSQP & Amulet \\
        \midrule
        \multicolumn{5}{c}{CIFAR-10} \\
        \midrule
        AlexNet   & 82.18\% & 84.72\% & 81.78\% & 82.18\% \\
        VGG16\_bn & 90.14\% & 91.50\% & 89.98\% & 90.14\% \\
        ResNet18  & 95.27\% & 93.45\% & 94.75\% & 95.27\% \\
        \midrule
        \multicolumn{5}{c}{CIFAR-100} \\
        \midrule
        AlexNet   & 55.63\% & 60.11\% & 55.41\% & 55.63\% \\
        VGG16\_bn & 72.50\% & 72.54\% & 72.08\% & 72.50\% \\
        ResNet18  & 74.68\% & 71.53\% & 74.19\% & 74.68\% \\
        \bottomrule
        \end{tabular}
    }
\end{table}

\begin{figure}
    \centering
    \includegraphics[width=0.9\linewidth]{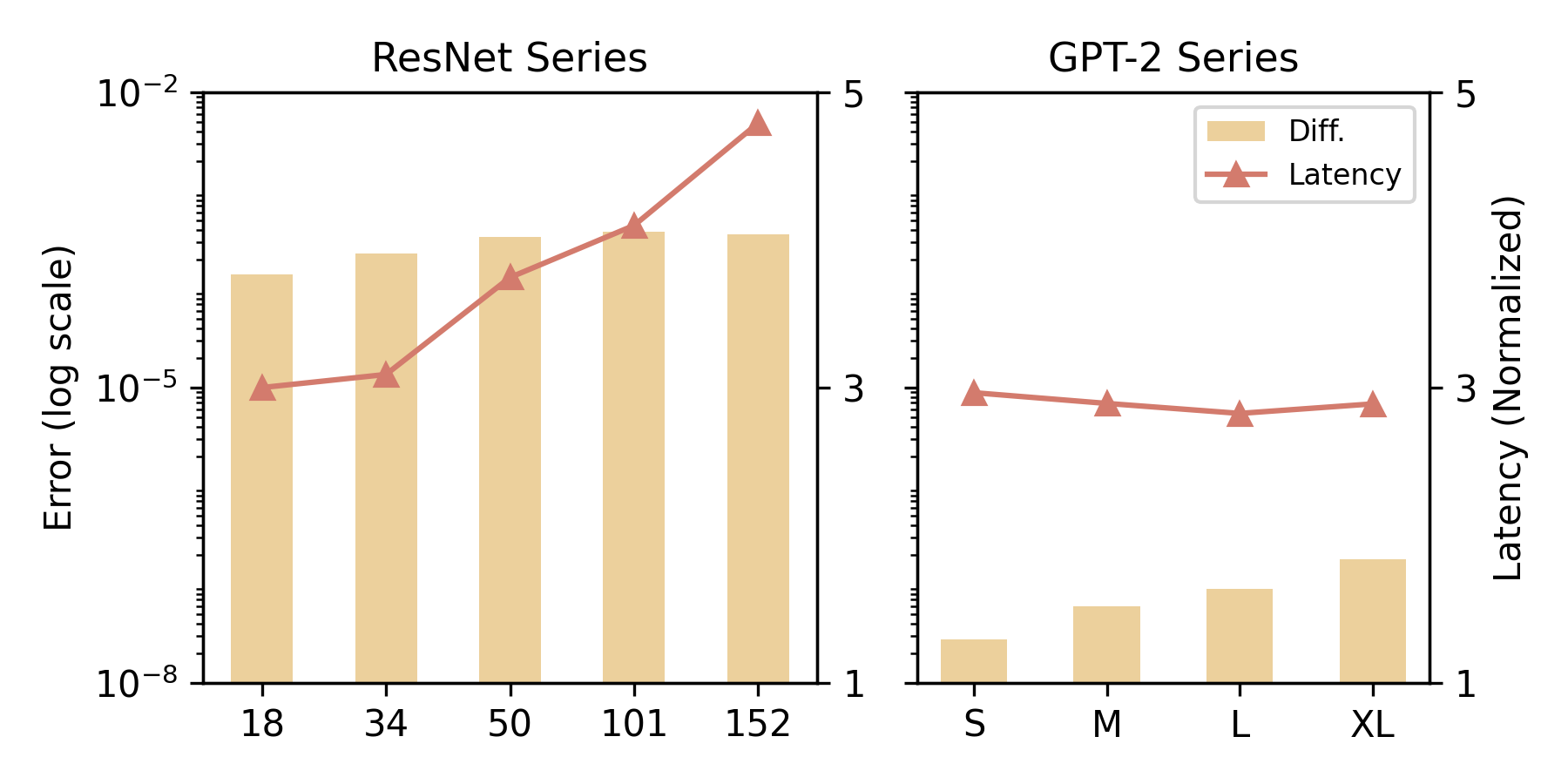} 
    \caption{Impact of model scale on accuracy and latency.}
    \label{fig:resgpt}
\end{figure}

\subsection{Inference Accuracy}
In this subsection, we evaluate the inference accuracy of \MyScheme{}. As analyzed in Section 5, our obfuscated model should theoretically produce identical computational results to the original model. The results in Table~\ref{tab:accuracy} show that the top-1 classification accuracy remains unchanged across all tested models. However, in practice, due to limited floating-point precision in matrix operations, errors are inevitably introduced. For small image recognition models, the maximum difference in final linear layer outputs between obfuscated and plaintext models remains on the order of 1e-4, with negligible impact on overall accuracy. With the increase of the model size, this error will accumulate to a non-negligible level. For instance, the error reaches 0.017 in GPT-2-xl.

To mitigate this issue, we further analyze the error magnitude introduced by different layers of the model. As shown in Table~\ref{tab:diff}, after passing through normalization layers, the accumulated errors from previous layers are significantly reduced. This is because the computational error is at least several orders of magnitude smaller than the calculated results, so after normalizing the results, the error becomes negligible. Therefore, we propose adding an additional normalization layer after the final layer of a model. Since the final layer typically produces a probability distribution over vocabulary tokens or classification labels, this modification does not affect the final output.
In Fig.~\ref{fig:resgpt}, we show the accuracy errors on GPT-2 family models after applying this optimization. The error remains below 1e-6 across all model variants, representing a reduction of several orders of magnitude. In addition, this optimization also has a negligible impact on computational overhead.

We also compare the accuracy of our method with two baselines in Table~\ref{tab:bigacc}. Since they do not provide results on LLMs, we only compare the accuracy on image classification models. TEESlice, which relies on model retraining, occasionally shows slight improvements, while TSQP exhibits minor accuracy drops due to its use of parameter projection techniques. In contrast, since \MyScheme{} does not change the structure or weights of the model, it has minimal impact on accuracy compared to similar approaches.

\subsection{Preprocessing Overhead}
Due to the ability to reuse a part of masks for linear layers, the preprocessing overhead of \MyScheme{} falls into two categories: obfuscating the entire model versus generating materials only required for non-linear layers and input processing.

During the first inference, full model obfuscation is required, with preprocessing times of: 0.471s for AlexNet, 0.096s for VGG16\_bn, 0.069s for ResNet18, 4.248s for BERT-base, and 4.475s for GPT-2-small. 
For subsequent inferences, when only generating OTPs for non-linear layers and input processing, the time overhead decreases significantly to: 0.005s for AlexNet, 0.008s for VGG16\_bn, 0.006s for ResNet18, 0.709s for BERT-base, and 0.723s for GPT-2-small. 
These results, ranging from seconds to minutes, demonstrate that \MyScheme{} introduces acceptable overhead. 
The materials generated during brief periods when devices are idle can support a substantial number of inference requests, making the approach practical for real-world deployment scenarios.

\begin{table}
\centering
    \caption{Impact of model scale on storage overhead.}
    \label{tab:storage}
    \resizebox{0.9\linewidth}{!}{ 
        \begin{tabular}{l|ccc} 
            \toprule 
            Model & Org. (MB) & Obf. (MB) & Overhead (\%) \\ 
            \midrule 
            ResNet-18  & 42.73  & 56.39  & 31.9  \\ 
            ResNet-34  & 81.35  & 103.54 & 27.3  \\ 
            ResNet-50  & 90.05  & 265.54 & 194.9 \\ 
            ResNet-101 & 162.81 & 491.01 & 201.6 \\ 
            ResNet-152 & 222.75 & 676.70 & 203.9 \\ 
            \bottomrule
        \end{tabular}
    }
\end{table}

\subsection{Storage Overhead}
Regarding the storage cost of the obfuscated model, the results in Table~\ref{tab:storage} show that as the original model scale increases, the size of its obfuscated version grows at an accelerating rate. This phenomenon occurs because larger models incorporate more non-linear layers, and the computational overhead for obfuscating non-linear layers substantially exceeds that of linear layers.
Specifically, obfuscating a linear layer incurs only limited parameter expansion, whereas a non-linear layer, which originally contains no parameters, is transformed into several large matrices, as detailed in Table~\ref{tab:ops}. As a result, the storage overhead from non-linear layers dominates the overall expansion, causing the obfuscated model size to scale with the number of such layers. Nevertheless, since the obfuscated model is stored and executed in untrusted memory, and the obfuscation process is completed during a separate preprocessing phase, the increased model size does not constitute a bottleneck for inference latency.

\section{Conclusion}

In this paper, we proposed \MyScheme{}, an efficient TEE-shielded on-device ML model protection method based on weight obfuscation. Our approach enables inference to be performed directly in untrusted memory using the obfuscated model, requiring only two rounds of interaction between untrusted and trusted memory before and after inference, and supports hardware acceleration including GPUs. We theoretically proved that the obfuscated model does not leak information about the original weights from an information-theoretic perspective. Experimental results show that \MyScheme{} preserves near-identical accuracy with inference latency within 4.8$\times$ of unprotected GPU inference, achieving about 2.2$\times$ speedup over the state-of-the-art.





\appendix


\bibliographystyle{plain}
\bibliography{ref_new}

@article{costan2016intel,
  title={Intel SGX explained},
  author={Costan, Victor},
  journal={IACR Cryptol, EPrint Arch},
  year={2016}
}

@inproceedings{zhang2024no,
  title={No privacy left outside: On the (in-) security of tee-shielded dnn partition for on-device ml},
  author={Zhang, Ziqi and Gong, Chen and Cai, Yifeng and Yuan, Yuanyuan and Liu, Bingyan and Li, Ding and Guo, Yao and Chen, Xiangqun},
  booktitle={Proc. of IEEE S\&P'24},
  pages={3327--3345},
  year={2024},
}

@inproceedings{sun2024tsqp,
  title={TSQP: Safeguarding real-time inference for quantization neural networks on edge devices},
  author={Sun, Yu and Xiong, Gaojian and Liu, Jianhua and Liu, Zheng and Cui, Jian},
  booktitle={Proc. of IEEE S\&P'25},
  pages={1--1},
  year={2025},
}

@inproceedings{sun2023shadownet,
  title={Shadownet: A secure and efficient on-device model inference system for convolutional neural networks},
  author={Sun, Zhichuang and Sun, Ruimin and Liu, Changming and Chowdhury, Amrita Roy and Lu, Long and Jha, Somesh},
  booktitle={Proc. of IEEE S\&P'23},
  pages={1596--1612},
  year={2023},
}

@inproceedings{shen2022soter,
  title={SOTER: Guarding black-box inference for general neural networks at the edge},
  author={Shen, Tianxiang and Qi, Ji and Jiang, Jianyu and Wang, Xian and Wen, Siyuan and Chen, Xusheng and Zhao, Shixiong and Wang, Sen and Chen, Li and Luo, Xiapu and others},
  booktitle={Proc. of USENIX ATC'22},
  pages={723--738},
  year={2022}
}

@inproceedings{liu2024tbnet,
  title={TBNet: A neural architectural defense framework facilitating DNN model protection in trusted execution environments},
  author={Liu, Ziyu and Zhou, Tong and Luo, Yukui and Xu, Xiaolin},
  booktitle={Proc. of DAC'24},
  pages={1--6},
  year={2024}
}

@inproceedings{liu2023mirrornet,
  title={MirrorNet: A TEE-friendly framework for secure on-device DNN inference},
  author={Liu, Ziyu and Luo, Yukui and Duan, Shijin and Zhou, Tong and Xu, Xiaolin},
  booktitle={Proc. of ICCAD'23},
  pages={1--9},
  year={2023},
}

@inproceedings{tramer2018slalom,
    title={Slalom: Fast, verifiable and private execution of neural networks in trusted hardware},
    author={Florian Tramer and Dan Boneh},
    booktitle={Proc. of ICLR'19},
    year={2019}
}

@inproceedings{mo2020darknetz,
  title={Darknetz: towards model privacy at the edge using trusted execution environments},
  author={Mo, Fan and Shamsabadi, Ali Shahin and Katevas, Kleomenis and Demetriou, Soteris and Leontiadis, Ilias and Cavallaro, Andrea and Haddadi, Hamed},
  booktitle={Proc. of MobiSys'20},
  pages={161--174},
  year={2020}
}

@inproceedings{hanzlik2021mlcapsule,
  title={Mlcapsule: Guarded offline deployment of machine learning as a service},
  author={Hanzlik, Lucjan and Zhang, Yang and Grosse, Kathrin and Salem, Ahmed and Augustin, Maximilian and Backes, Michael and Fritz, Mario},
  booktitle={Proc. of CVPR Workshop'21},
  pages={3300--3309},
  year={2021}
}

@inproceedings{elgamal2020serdab,
  title={Serdab: An IoT framework for partitioning neural networks computation across multiple enclaves},
  author={Elgamal, Tarek and Nahrstedt, Klara},
  booktitle={Proc. of CCGRID'20},
  pages={519--528},
  year={2020},
}

@inproceedings{xiang2021aegisdnn,
  title={Aegisdnn: Dependable and timely execution of dnn tasks with sgx},
  author={Xiang, Yecheng and Wang, Yidi and Choi, Hyunjong and Karimi, Mohsen and Kim, Hyoseung},
  booktitle={Proc. of RTSS'21},
  pages={68--81},
  year={2021}
}

@inproceedings{xu2024permutation,
  title={Permutation equivariance of transformers and its applications},
  author={Xu, Hengyuan and Xiang, Liyao and Ye, Hangyu and Yao, Dixi and Chu, Pengzhi and Li, Baochun},
  booktitle={Proc. of CVPR'24},
  pages={5987--5996},
  year={2024}
}

@misc{emily2023h100,
  author={Emily, Apsey and Phil, Rogers and Michael, O'Connor and Rob, Nertney},
  title={Confidential computing on NVIDIA H100 GPUs for secure and trustworthy AI},
  year={2023},
  url={https://developer.nvidia.com/blog/confidential-computing-on-h100-gpus-for-secure-and-trustworthy-ai/},
}

@inproceedings{zhanggroupcover,
  title={GroupCover: A secure, efficient and scalable inference framework for on-device model protection based on TEEs},
  author={Zhang, Zheng and Wang, Na and Zhang, Ziqi and Zhang, Yao and Zhang, Tianyi and Liu, Jianwei and Wu, Ye},
  booktitle={Proc. of ICML'24},
  pages={59992--60003},
  year={2024}
}

@article{sev2020strengthening,
  title={Strengthening VM isolation with integrity protection and more},
  author={Sev-Snp, AMD},
  journal={White Paper, January},
  volume={53},
  pages={1450--1465},
  year={2020}
}

@article{pinto2019demystifying,
  title={Demystifying arm trustzone: A comprehensive survey},
  author={Pinto, Sandro and Santos, Nuno},
  journal={ACM computing surveys (CSUR)},
  pages={1--36},
  year={2019},
}

@inproceedings{vaswani2017attention,
  title={Attention is all you need},
  author={Vaswani, Ashish and Shazeer, Noam and Parmar, Niki and Uszkoreit, Jakob and Jones, Llion and Gomez, Aidan N and Kaiser, {\L}ukasz and Polosukhin, Illia},
  booktitle={Proc. of NeurIPS'17},
  year={2017}
}

@inproceedings{wangglue,
  title={GLUE: A multi-task benchmark and analysis platform for natural language understanding},
  author={Wang, Alex and Singh, Amanpreet and Michael, Julian and Hill, Felix and Levy, Omer and Bowman, Samuel R},
  booktitle={Proc. of ICLR'18},
  year={2018}
}

@inproceedings{he2016deep,
  title={Deep residual learning for image recognition},
  author={He, Kaiming and Zhang, Xiangyu and Ren, Shaoqing and Sun, Jian},
  booktitle={Proc. of CVPR'16},
  pages={770--778},
  year={2016}
}

@inproceedings{simonyan2015very,
  title={Very deep convolutional networks for large-scale image recognition},
  author={Simonyan, K and Zisserman, A},
  booktitle={Proc. of ICLR'15},
  year={2015}
}

@inproceedings{krizhevsky2012imagenet,
  title={Imagenet classification with deep convolutional neural networks},
  author={Krizhevsky, Alex and Sutskever, Ilya and Hinton, Geoffrey E},
  booktitle={Proc. of NeurIPS'12},
  year={2012}
}

@inproceedings{devlin2019bert,
  title={Bert: Pre-training of deep bidirectional transformers for language understanding},
  author={Devlin, Jacob and Chang, Ming-Wei and Lee, Kenton and Toutanova, Kristina},
  booktitle={Proc. of NAACL'19},
  pages={4171--4186},
  year={2019}
}

@misc{krizhevsky2009learning,
  title={Learning multiple layers of features from tiny images.},
  author={Krizhevsky, Alex and Hinton, Geoffrey and others},
  year={2009}
}

@Misc{appleTEE,
	note = {\url{https://support.apple.com/en-sg/guide/security/sec59b0b31ff/web}},
	title = {Secure Enclave - Apple Support (SG).},
    author = {Apple},
	year = {2024}
}

@Misc{iotTEE,
	note = {\url{https://www.intel.com/content/www/us/en/products/details/embedded-processors.html}},
	title = {Product brief, 3rd Gen Intel Xeon scalable processor for IoT.},
    author = {Intel},
	year = {2023}
}

@misc{inteltdx,
    note = {\url{https://cdrdv2.intel.com/v1/dl/getContent/690419}},
    title = {Intel Trust Domain Extensions.},
    author = {Intel},
    year = {2023}
}

@article{carlini2020cryptanalytic,
  title={Cryptanalytic extraction of neural network models},
  author={Carlini, Nicholas and Jagielski, Matthew and Mironov, Ilya},
  journal={Advances in Cryptology — Crypto'20},
  pages={189--218},
  year={2020}
}

@article{radford2019language,
  title={Language models are unsupervised multitask learners},
  author={Radford, Alec and Wu, Jeffrey and Child, Rewon and Luan, David and Amodei, Dario and Sutskever, Ilya and others},
  journal={OpenAI blog},
  volume={1},
  pages={9},
  year={2019}
}

@inproceedings{sun2021mind,
  title={Mind your weight (s): A large-scale study on insufficient machine learning model protection in mobile apps},
  author={Sun, Zhichuang and Sun, Ruimin and Lu, Long and Mislove, Alan},
  booktitle={Proc. of USENIX Security'21},
  pages={1955--1972},
  year={2021}
}

@inproceedings{orekondy2019knockoff,
  title={Knockoff nets: Stealing functionality of black-box models},
  author={Orekondy, Tribhuvanesh and Schiele, Bernt and Fritz, Mario},
  booktitle={Proc. of CVPR'19},
  pages={4954--4963},
  year={2019}
}

@inproceedings{yuan2024hypertheft,
  title={Hypertheft: Thieving model weights from tee-shielded neural networks via ciphertext side channels},
  author={Yuan, Yuanyuan and Liu, Zhibo and Deng, Sen and Chen, Yanzuo and Wang, Shuai and Zhang, Yinqian and Su, Zhendong},
  booktitle={Proc. of CCS'24},
  pages={4346--4360},
  year={2024}
}

@book{lindell2017tutorials,
  title={Tutorials on the Foundations of Cryptography},
  author={Lindell, Yehuda},
  year={2017},
  publisher={Springer}
}

@inproceedings{wang2025arrow,
  title={Game of Arrows: On the (In-) Security of Weight Obfuscation for On-Device TEE-Shielded LLM Partition Algorithms},
  author={Wang, Pengli and Dong, Bingyou and Cai, Yifeng and Zhang, Zheng and Liu, Junlin and Xue, Huanran and Wu, Ye and Zhang, Yao and Zhang, Ziqi},
  booktitle={Proc. of USENIX Security'25},
  year={2025}
}

@inproceedings{nayan2024sok,
  title={SoK: All you need to know about $\{$On-Device$\}$$\{$ML$\}$ model extraction-the gap between research and practice},
  author={Nayan, Tushar and Guo, Qiming and Al Duniawi, Mohammed and Botacin, Marcus and Uluagac, Selcuk and Sun, Ruimin},
  booktitle={Proc. of USENIX Security'24)},
  pages={5233--5250},
  year={2024}
}

@article{li2024personal,
  title={Personal llm agents: Insights and survey about the capability, efficiency and security},
  author={Li, Yuanchun and Wen, Hao and Wang, Weijun and Li, Xiangyu and Yuan, Yizhen and Liu, Guohong and Liu, Jiacheng and Xu, Wenxing and Wang, Xiang and Sun, Yi and others},
  journal={arXiv preprint arXiv:2401.05459},
  year={2024}
}

@article{zhao2022survey,
  title={A survey of deep learning on mobile devices: Applications, optimizations, challenges, and research opportunities},
  author={Zhao, Tianming and Xie, Yucheng and Wang, Yan and Cheng, Jerry and Guo, Xiaonan and Hu, Bin and Chen, Yingying},
  journal={Proc. of the IEEE},
  pages={334--354},
  year={2022}
}

@article{russakovsky2015imagenetdataset,
  title={Imagenet large scale visual recognition challenge},
  author={Russakovsky, Olga and Deng, Jia and Su, Hao and Krause, Jonathan and Satheesh, Sanjeev and Ma, Sean and Huang, Zhiheng and Karpathy, Andrej and Khosla, Aditya and Bernstein, Michael and others},
  journal={International journal of computer vision},
  pages={211--252},
  year={2015},
}

@inproceedings{duy2025incognitos,
  title={INCOGNITOS: A Practical Unikernel Design for Full-System Obfuscation in Confidential Virtual Machines},
  author={Duy, Kha Dinh and Kim, Jaeyoon and Lim, Hajeong and Lee, Hojoon},
  booktitle={2025 IEEE Symposium on Security and Privacy (SP)},
  pages={4192--4209},
  year={2025},
  organization={IEEE}
}

@inproceedings{wichelmann2024obelix,
  title={Obelix: Mitigating side-channels through dynamic obfuscation},
  author={Wichelmann, Jan and Rabich, Anja and P{\"a}tschke, Anna and Eisenbarth, Thomas},
  booktitle={2024 IEEE Symposium on Security and Privacy (SP)},
  pages={4182--4199},
  year={2024},
  organization={IEEE}
}

\section{Security Analysis}
\label{sec.ana}
\subsection{Proof of Theorem 5.1}

\begin{proof}
In particular, as illustrated in Alg.~\ref{alg:allprocess}, \MyScheme{} uses three uniformly random and invertible matrices $P, Q$ and $S$, all of which are sampled independently and remain secret to the adversary.
In each round $i \in \{1, 2, \dots, t\}$, where \(t\) is polynomially bounded, 
the system selects a fresh random matrix \(T_i\).
Without loss of generality, we set \(X_i = 0\) to streamline the subsequent security analysis.

The adversary then observes the tuple
\[
\tilde{\mathcal{O}_i} \;=\; \bigl(\tilde{X}_i, \tilde{T}_i, \tilde{W}\bigr)
\;=\; \bigl(-P\,T_i\,Q,\; P\,T_i\,W\,S,\; Q^{-1}\,W\,S\bigr)
\]
in each round.
Consequently, demonstrating that the adversary gains no information about \(W\) 
is equivalent to showing that the following mutual information equals zero:
\[
I\bigl(W \; ; \;\tilde{\mathcal{O}_1}, \dots, \tilde{\mathcal{O}_t}\bigr) 
\;=\; 0.
\]

We begin by analyzing the impact of observing $\tilde{W} = Q^{-1} W S$. Since $Q$ and $S$ are independently and uniformly sampled invertible matrices, the set of possible $W$ matrices that could yield a given $\tilde{W}$ spans the entire space. Indeed, for any particular $\tilde{W}$, there exist invertible matrices $Q', S'$ such that $W = Q' \,\tilde{W}\, (S')^{-1}$. Hence, from the adversary's perspective, conditioning on $\tilde{W}$ does not reduce the uncertainty of $W$, implying
\[
H(W \mid \tilde{W}) = H(W).
\]

Next, we consider $\tilde{X}_i = -P T_i Q$. Here, $T_i$ is freshly and uniformly chosen in each round, while $P, Q$ remain unknown and uniformly sampled from the set of invertible matrices. Consequently, the adversary cannot deduce $T_i$ or $P T_i$ from $\tilde{X}_i$, and so
\[
H(W \mid \tilde{X}_i, \tilde{W}) = H(W \mid \tilde{W}).
\]

Further, recall that $\tilde{T}_i = P T_i W S$. Since $P T_i$ is uniformly random and independent of $W$, and $S$ is likewise unknown, $\tilde{T}_i$ provides no additional information about $W$. In particular, for any two candidate matrices $W_1, W_2$, one can find a suitable $T_i'$ so that $P T_i' W_1 S = P T_i W_2 S$. Thus,
\[
H(W \mid \tilde{X}_i, \tilde{T}_i, \tilde{W}) = H(W \mid \tilde{\mathcal{O}_i}) = H(W \mid \tilde{W}).
\]

As the fresh randomness $T_i$ is used in each round independently, the same argument applies to every round. By the chain rule of entropy,
\[
H(W \mid \tilde{\mathcal{O}}_1, \dots, \tilde{\mathcal{O}}_t)
= H(W \mid \tilde{W}).
\]
Combining these observations, we conclude
\[
H(W \mid \tilde{\mathcal{O}}_1, \dots, \tilde{\mathcal{O}}_t) = H(W),
\]
and therefore the mutual information satisfies
\[
I(W ; \tilde{\mathcal{O}}_1, \dots, \tilde{\mathcal{O}}_t)
= H(W) - H(W \mid \tilde{\mathcal{O}}_1, \dots, \tilde{\mathcal{O}}_t)
= 0.
\]
This completes the proof that the adversary learns no information about $W$ in an information-theoretic sense.
\end{proof}

\subsection{Proof of Theorem 5.2}

\begin{proof}
The non-linear layer employs two independently sampled, uniformly random, and adversary-secret invertible matrices $P$ and $Q$, as well as per-inference independently and uniformly sampled random matrices $\{R_j\}_{j=1}^3$ and permutations $\{\pi_j\}_{j=1}^4$.

We first analyze the observation $\tilde{X} = PXQ$. Since both $P$ and $Q$ are independently and uniformly sampled invertible matrices, for any given $\tilde{X}$, the set of possible $X$ matrices spans the entire space. Specifically, for any $\tilde{X}$, there exist invertible matrices $P'$ and $Q'$ such that $X = P' \tilde{X} Q'$. Therefore, from the adversary's perspective, $\tilde{X}$ does not reduce the uncertainty of $X$, yielding:
\[
H(X \mid \tilde{X}) = H(X).
\]

Next, considering $M_1 = \pi_3(\pi_1P^{-1}\otimes R_1)$, since $\pi_3$, $\pi_1$, and $R_1$ are all freshly and uniformly sampled random matrices in each round, while $P$ remains unknown and uniformly sampled, the adversary cannot derive any information about $X$ from $M_1$. The same holds for $M_2$, $M_3$, $M_4$, and $R_2$. Thus, we have:
\[
H(X \mid \tilde{X}, M_1, M_2, M_3, M_4, R_2) = H(X \mid \tilde{X}).
\]

Consequently, the mutual information satisfies:
\[
I(X; \mathcal{O}) = H(X) - H(X \mid \tilde{X}, M_1, M_2, M_3, M_4, R_2) = 0.
\]

This completes our proof that the adversary gains no information about the correct inputs or outputs during the computation of the non-linear layer.
\end{proof}

\subsection{Proof of Theorem 5.3}

\begin{proof}
The intermediate linear layer employs three independently sampled, uniformly random, and adversary-secret invertible matrices $P$, $Q$, and $S$. Across polynomially bounded rounds $i \in \{1, 2, \cdots, t\}$, the adversary observes the following tuple in each observation:
\[
\mathcal{O}_i = \bigl(\tilde{X}_i, \tilde{W}\bigr) = \bigl(PX_iQ,\; Q^{-1}WS\bigr)
\]

We first analyze the effect of observing $\tilde{W} = Q^{-1}WS$. Since $Q$ and $S$ are independently and uniformly sampled invertible matrices, for any given $\tilde{W}$, the set of possible $W$ matrices spans the entire space. Specifically, for any particular $\tilde{W}$, there exist invertible matrices $Q'$ and $S'$ such that $W = Q'\tilde{W}(S')^{-1}$. Hence, from the adversary's perspective, conditioning on $\tilde{W}$ does not reduce the uncertainty about $W$, yielding:
\[
H(W \mid \tilde{W}) = H(W).
\]

Next, we examine $\tilde{X}_i = PX_iQ$. The input $X_i$ of an intermediate linear layer should be the output of the prior non-linear layer. According to Theorem \ref{theorem:2}, the adversary gains no information about $X_i$, and since $P$ and $Q$ remain unknown and uniformly sampled, the adversary cannot deduce $X_i$ or $PX_i$ from $\tilde{X}_i$. Therefore:
\[
H(W \mid \tilde{X}_i, \tilde{W}) = H(W \mid \mathcal{O}_i) = H(W \mid \tilde{W}).
\]
As the value of $X_i$  is unknown to the adversary in all rounds, this argument holds universally across all observations. Thus we obtain:
$$
H(W \mid \mathcal{O}_1, \cdots, \mathcal{O}_t) = H(W \mid \tilde{W}).
$$
Combining these results, we conclude:
$$
H(W \mid \mathcal{O}_1, \cdots, \mathcal{O}_t) = H(W),
$$
and consequently, the mutual information satisfies:
\[
I(W ; \mathcal{O}_1, \dots, \mathcal{O}_t) = H(W) - H(W \mid \mathcal{O}_1, \dots, \mathcal{O}_t) = 0.
\]

This completes the proof that the adversary gains zero information about $W$ in the information-theoretic sense.
\end{proof}

\section{Error Analysis}

\begin{table}[htbp]
\caption{Detailed Error Analysis for a Transformer Block.}
\label{tab:diff}
\centering
\begin{tabular}{lc}
\toprule
Component & Error Value (diff.) \\
\midrule
Self-attention & 5.2947 \\
LayerNorm (1st) & 0.005 \\
Linear (1st) & 0.1397 \\
Activation & 0.721 \\
Linear (2nd) & 5.4967 \\
LayerNorm (2nd) & 0.0049 \\
\bottomrule
\end{tabular}
\end{table}

\end{document}